\providecommand{\U}[1]{\protect\rule{.1in}{.1in}}
\newtheorem{theorem}{Theorem}
\newtheorem{claim}[theorem]{Claim}
\newtheorem{conjecture}[theorem]{Conjecture}
\newtheorem{definition}[theorem]{Definition}
\newtheorem{lemma}[theorem]{Lemma}
\newenvironment{proof}[1][Proof]{\noindent\textbf{#1.} }{\ \rule{0.5em}{0.5em}}
\begin{document}

\title{A Counterexample to the Generalized Linial-Nisan Conjecture}
\author{Scott Aaronson\thanks{MIT. \ Email: aaronson@csail.mit.edu. \ \ This material
is based upon work supported by the National Science Foundation under Grant
No. 0844626. \ Also supported by a DARPA YFA grant and the Keck Foundation.}}
\date{}
\maketitle

\begin{abstract}
In earlier work \cite{aar:ph}, we gave an oracle separating the relational
versions of $\mathsf{BQP}$ and the polynomial hierarchy, and showed that an
oracle separating the decision versions would follow from what we called the
\textit{Generalized Linial-Nisan (GLN) Conjecture}: that \textquotedblleft
almost $k$-wise independent\textquotedblright\ distributions are
indistinguishable from the uniform distribution by constant-depth circuits.
\ The original Linial-Nisan Conjecture was recently proved by Braverman
\cite{braverman}; we offered a \$200 prize for the generalized version. \ In
this paper, we save ourselves \$200 by showing that the GLN Conjecture is
false, at least for circuits of depth $3$ and higher.

As a byproduct, our counterexample also implies that $\mathsf{\Pi}_{2}%
^{p}\not \subset \mathsf{P}^{\mathsf{NP}}$\ relative to a random oracle with
probability $1$. \ It has been conjectured since the 1980s that $\mathsf{PH}%
$\ is infinite relative to a random oracle, but the highest levels of
$\mathsf{PH}$ previously proved separate were $\mathsf{NP}$ and $\mathsf{coNP}%
$.

Finally, our counterexample implies that the famous results of Linial,
Mansour, and Nisan \cite{lmn}, on the structure of $\mathsf{AC}^{0}%
$\ functions, cannot be improved in several interesting respects.

\end{abstract}

\section{Introduction\label{INTRO}}

Proving an oracle separation between $\mathsf{BQP}$ and $\mathsf{PH}$\ is one
of the central open problems of quantum complexity theory. \ In a recent paper
\cite{aar:ph}, we reported the following progress on the problem:

\begin{enumerate}
\item[(1)] We constructed an oracle relative to which $\mathsf{FBQP}%
\not \subset \mathsf{FBPP}^{\mathsf{PH}}$, where $\mathsf{FBQP}$\ and
$\mathsf{FBPP}^{\mathsf{PH}}$\ are the \textquotedblleft
relational\textquotedblright\ versions of $\mathsf{BQP}$\ and $\mathsf{PH}%
$\ respectively (that is, the versions where there are many valid outputs, and
an algorithm's task is to output any one of them).

\item[(2)] We proposed a natural \textit{decision} problem, called
\textsc{Fourier Checking}, which is provably in $\mathsf{BQP}$\ (as an oracle
problem) and which we conjectured was not in $\mathsf{PH}$.

\item[(3)] We showed that \textsc{Fourier Checking}\ has a property called
\textit{almost }$k$\textit{-wise independence}, and that no $\mathsf{BPP}%
_{\mathsf{path}}$\ or $\mathsf{SZK}$ problem shares that property. \ This
allowed us to give oracles relative to which $\mathsf{BQP}$\ was outside those
classes, and to reprove all the known oracle separations between
$\mathsf{BQP}$\ and classical complexity classes in a unified way.

\item[(4)] We conjectured that no $\mathsf{PH}$\ problem has the almost
$k$-wise independence property,\ and called that the \textit{Generalized
Linial-Nisan (GLN) Conjecture}. \ Proving the GLN Conjecture would imply the
existence of an oracle relative to which $\mathsf{BQP}\not \subset
\mathsf{PH}$.
\end{enumerate}

This paper does nothing to modify points (1)-(3) above:\ the unconditional
results in \cite{aar:ph}\ are still true, and we still conjecture not only
that there exists an oracle relative to which $\mathsf{BQP}\not \subset
\mathsf{PH}$, but that \textsc{Fourier Checking}\ is such an oracle.

However, we will show that the hope of proving \textsc{Fourier Checking}%
$\notin\mathsf{PH}$\ by proving the GLN Conjecture was unfounded:

\begin{quotation}
\noindent\textit{The GLN Conjecture is false, at least for }$\mathsf{\Pi}%
_{2}^{p}$\textit{ and higher levels of the polynomial hierarchy}.
\end{quotation}

We prove this by giving an explicit counterexample: a family of depth-three
$\mathsf{AC}^{0}$ circuits that distinguish the uniform distribution over
$n$-bit strings from an $\widetilde{O}\left(  k/n\right)  $-almost $k$-wise
independent\ distribution, with constant bias.\footnote{Note that depth-three
$\mathsf{AC}^{0}$ circuits correspond to the second level of $\mathsf{PH}$,
depth-four circuits correspond to the third level, and so on.}

Our counterexample was inspired by a recent result of Beame and Machmouchi
\cite{beame:quantum}, giving a Boolean function with quantum query complexity
$\Omega\left(  n/\log n\right)  $ that is computable by a depth-three
$\mathsf{AC}^{0}$ circuit. \ This disproved a conjecture, relayed to us
earlier by Beame, stating that every $\mathsf{AC}^{0}$\ function\ has quantum
query complexity $n^{1-\Omega\left(  1\right)  }$. \ Like the Beame-Machmouchi
counterexample, ours involves inputs $X=x_{1}\ldots x_{N}\in\left[  M\right]
^{N}$\ that are lists of positive integers, with the $x_{i}$'s encoded in
binary to obtain a Boolean problem; as well as a function $f:\left[  M\right]
^{N}\rightarrow\left\{  0,1\right\}  $\ that uses two alternating quantifiers
to express a \textquotedblleft global\textquotedblright\ property of $X$. \ In
Beame and Machmouchi's case, the property in question was that the function
$x\left(  i\right)  :=x_{i}$\ is $2$-to-$1$; in our case, the property is that
$x\left(  i\right)  $ is surjective.\footnote{Beame and Machmouchi
\cite{beame:quantum} also mention the surjectivity property, in Corollary 6 of
their paper.}

Our counterexample makes essential use of depth-\textit{three} circuits, and
we find it plausible that the GLN Conjecture still holds for
depth-\textit{two} circuits (i.e., for DNF formulas).\footnote{Indeed, we
originally formulated the conjecture for depth-two circuits only, before
(rashly) extending it to arbitrary depths.} \ As shown in \cite{aar:ph},
proving the GLN Conjecture for depth-two circuits would yield an oracle
relative to which $\mathsf{BQP}\not \subset \mathsf{AM}$, which is already a
longstanding open problem.

Given that the GLN Conjecture resisted attacks for two years (and indirectly
motivated the beautiful works of Razborov \cite{razborov:bazzi}\ and Braverman
\cite{braverman}\ on the original LN Conjecture), our counterexample cannot
have been \textit{quite} as obvious as it seems in retrospect! \ Perhaps Andy
Drucker (personal communication) summarized the situation best: almost
$k$-wise independent distributions seem to be much better at fooling
\textit{people} than at fooling circuits.

\subsection{Further Implications\label{IMP}}

Besides falsifying the GLN Conjecture, our counterexample has several other
interesting implications for $\mathsf{PH}$ and $\mathsf{AC}^{0}$.

Firstly, we are able to use our counterexample to prove that $\left(
\mathsf{\Pi}_{2}^{p}\right)  ^{A}\not \subset \mathsf{P}^{\mathsf{NP}^{A}}%
$\ with probability $1$\ relative to a random oracle $A$. \ Indeed, we
conjecture that our counterexample can even be used to prove $\left(
\mathsf{\Pi}_{2}^{p}\right)  ^{A}\not \subset \left(  \mathsf{\Sigma}_{2}%
^{p}\right)  ^{A}$\ with probability $1$\ for a random oracle $A$. \ The
seminal work of Yao \cite{yao:ph} showed $\mathsf{PH}$\ infinite relative to
\textit{some} oracle,\ but it has been an open problem for almost thirty years
to prove $\mathsf{PH}$\ infinite relative to a \textit{random} oracle (see the
book of H\aa stad \cite{hastad:book} for discussion). \ Motivation for this
problem comes from a surprising result of Book \cite{book}, which\ says that
if $\mathsf{PH}$\ collapses relative to a random oracle, then it also
collapses in the unrelativized world. \ Our result, while simple, appears to
represent the first \textquotedblleft progress\textquotedblright\ toward
separating $\mathsf{PH}$\ by random oracles since the original result of
Bennett and Gill \cite{bg}\ that $\mathsf{P}\neq\mathsf{NP}\neq\mathsf{coNP}%
$\ relative to a random oracle with probability $1$.\footnote{Though
\textquotedblleft working from the opposite direction,\textquotedblright\ Cai
\cite{cai:ph} proved the beautiful result that $\mathsf{PH}\neq\mathsf{PSPACE}%
$\ relative to a random oracle with probability $1$. \ Note that any
relativized world where $\mathsf{PH}$\ is infinite must also satisfy
$\mathsf{PH}\neq\mathsf{PSPACE}$. \ Cai \cite{cai:bh} also proved that
$\mathsf{BH}$\ is infinite with probability $1$, where $\mathsf{BH}%
$\ represents the Boolean hierarchy over $\mathsf{NP}$, a subclass of
$\mathsf{P}_{||}^{\mathsf{NP}}$.}

Secondly, our counterexample shows that the celebrated results of Linial,
Mansour, and Nisan \cite{lmn}, on the Fourier spectrum of $\mathsf{AC}^{0}%
$\ functions, \textit{cannot} be improved in several important respects. \ In
particular, Linial et al.\ showed that every Boolean\ function\ $f:\left\{
0,1\right\}  ^{n}\rightarrow\left\{  0,1\right\}  $\ in $\mathsf{AC}^{0}$\ has
\textit{average sensitivity} $O\left(  \operatorname*{polylog}\left(
n\right)  \right)  $. \ However, we observe that this result fails completely
if we consider a closely-related measure, the\textit{ average
block-sensitivity}. \ Indeed, there exists a reasonably-balanced Boolean
function $f\in\mathsf{AC}^{0}$\ such that every $1$-input can be modified in
$\Omega\left(  n/\log n\right)  $\ disjoint ways to produce a $0$-input, and
almost every $0$-input can be modified in $\Omega\left(  n/\log n\right)  $
disjoint ways to produce a $1$-input. \ What makes this behavior interesting
is that one normally associates it with (say) \textsc{Parity}, the canonical
function \textit{not} in $\mathsf{AC}^{0}$!

Linial et al.\ \cite{lmn}\ also showed that every Boolean function
$f\in\mathsf{AC}^{0}$ has a \textit{low-degree approximating polynomial}: that
is, a real polynomial $p:\left\{  0,1\right\}  ^{n}\rightarrow\mathbb{R}$, of
degree $O\left(  \operatorname*{polylog}\left(  n\right)  \right)  $, such
that%
\[
\operatorname*{E}_{X\in\left\{  0,1\right\}  ^{n}}\left[  \left(  p\left(
X\right)  -f\left(  X\right)  \right)  ^{2}\right]  =o\left(  1\right)  .
\]
However, using our counterexample, we will show that such a polynomial
$p$\ \textit{cannot} generally be written as a linear combination of terms,
$p=\sum_{C}\alpha_{C}C$, where the coefficients satisfy the following bound:%
\[
\sum_{C}\left\vert \alpha_{C}\right\vert 2^{-\left\vert C\right\vert
}=n^{o\left(  1\right)  }.
\]
In other words, such a polynomial cannot be \textquotedblleft
low-fat\textquotedblright\ in the sense defined by Aaronson \cite{aar:ph}%
,\ but must instead involve \textquotedblleft massive
cancellations\textquotedblright\ between positive and negative terms. \ This
gives the first example of a Boolean function $f$\ that can be approximated in
$L_{2}$-norm by a low-degree polynomial, but \textit{not} by a low-degree
low-fat polynomial---thereby answering another one of the open questions from
\cite{aar:ph}.

\subsection{The Future of $\mathsf{BQP}$\ and $\mathsf{PH}$\label{FUTURE}}

While this paper rules out the GLN approach, at least three plausible avenues
remain for proving an oracle separation between $\mathsf{BQP}$ and
$\mathsf{PH}$.

\begin{enumerate}
\item[(1)] Our original idea for proving \textsc{Fourier Checking}%
$\notin\mathsf{PH}$\ was to use a direct random restriction argument---and
while we were unable to make such an argument work, we have also found nothing
to rule it out.

\item[(2)] Besides almost $k$-wise independence, the other \textquotedblleft
obvious\textquotedblright\ property of \textsc{Fourier Checking}\ that might
be useful for lower bounds is its close connection with the \textsc{Majority}%
\ function. \ Indeed, given as input the truth table of a Boolean function
$f:\left\{  0,1\right\}  ^{n}\rightarrow\left\{  -1,1\right\}  $, estimating a
\textit{single} Fourier coefficient $\widehat{f}\left(  s\right)  :=\frac
{1}{2^{n/2}}\sum_{x}\left(  -1\right)  ^{x\cdot s}f\left(  x\right)  $\ is
easily seen to be equivalent to solving \textsc{Majority}, which is known to
be hard for $\mathsf{AC}^{0}$. \ Thus, in proving \textsc{Fourier
Checking}$\notin\mathsf{PH}$, the difficulty is \textquotedblleft
merely\textquotedblright\ to show that checking the answers to $2^{n}$
overlapping \textsc{Majority}\ instances is not significantly easier for an
$\mathsf{AC}^{0}$\ circuit than checking the answer to one instance. \ While
the usual hybrid argument fails in this case, one could hope for some other
reduction---possibly a non-black-box reduction---showing that if
\textsc{Fourier Checking}\ is in $\mathsf{AC}^{0}$, then \textsc{Majority} is
as well.

\item[(3)] Recently, Fefferman and Umans \cite{feffumans} proposed a beautiful
alternative approach to the relativized $\mathsf{BQP}$\ versus $\mathsf{PH}%
$\ question. \ Like approach (2) above, their approach is based on a hoped-for
reduction from \textsc{Majority}. \ However, they replace \textsc{Fourier
Checking}\ by a different candidate problem, which involves Nisan-Wigderson
combinatorial designs \cite{nw} rather than the Fourier transform. \ They show
that their candidate problem is in $\mathsf{BQP}$, and also show that it is
\textit{not} in $\mathsf{PH}$, assuming (roughly speaking) that the analysis
of the NW generator can be improved in a direction that people have wanted to
improve it in for independent reasons. \ Fefferman and Umans' conjecture
follows from the GLN Conjecture,\footnote{As, indeed, \textit{anything}
follows from the GLN Conjecture.} but is much more tailored to a specific
pseudorandom generator, and is completely unaffected by our counterexample.
\end{enumerate}

\subsection{Organization\label{ORG}}

The rest of the paper is organized as follows. \ Section \ref{BACKGROUND}
provides background on $\mathsf{AC}^{0}$, (almost) $k$-wise independence, and
the (Generalized) Linial-Nisan Conjecture; then Section \ref{COUNTER} presents
our counterexample. \ Section \ref{RORACLE} uses the counterexample to prove
that $\mathsf{\Pi}_{2}^{p}\not \subset \mathsf{P}^{\mathsf{NP}}$\ relative to
a random oracle, and Section \ref{POLY} gives implications of the
counterexample for the noise sensitivity and approximate degree of
$\mathsf{AC}^{0}$ functions. \ Section \ref{DISC}\ concludes with some
discussion and open problems.

\section{Background\label{BACKGROUND}}

We refer the reader to \cite{aar:ph} for details on the original and
generalized Linial-Nisan Conjectures, as well as their relationship to
$\mathsf{BQP}$ and $\mathsf{PH}$. \ In this section, we give a brief recap of
the definitions, conjectures, and results that are relevant to our counterexample.

By $\mathsf{AC}^{0}$, we mean the class of Boolean function families $\left\{
f_{n}\right\}  _{n\geq1}$ such that each $f_{n}:\left\{  0,1\right\}
^{n}\rightarrow\left\{  0,1\right\}  $\ is computable by a circuit\ of AND,
OR, and NOT gates with constant depth, unbounded fanin, and size $n^{O\left(
1\right)  }$. \ Here \textit{depth} means the number of alternating layers of
AND and OR gates; NOT gates are not counted. \ Abusing notation, we will often
use phrases like \textquotedblleft$\mathsf{AC}^{0}$ circuit of size
$2^{n^{o\left(  1\right)  }}$,\textquotedblright\ which means the size is now
superpolynomial but the depth is still $O\left(  1\right)  $. \ We will also
generally drop the subscript of $n$.

Throughout the paper we abbreviate probability expressions such as $\Pr
_{X\sim\mathcal{D}}\left[  f\left(  X\right)  \right]  $\ by $\Pr
_{\mathcal{D}}\left[  f\right]  $. \ Let $\mathcal{U}$\ be the uniform
distribution over $n$-bit strings, so that $\Pr_{\mathcal{U}}\left[  X\right]
=1/2^{n}$\ for all $X\in\left\{  0,1\right\}  ^{n}$. \ A
distribution\ $\mathcal{D}$\ over $\left\{  0,1\right\}  ^{n}$ is called
$k$\textit{-wise independent} (for $k\leq n$) if $\mathcal{D}$ is uniform on
every subset of at most $k$\ bits. \ A central question in pseudorandomness
and cryptography is what computational resources are needed to distinguish
such a \textquotedblleft pretend-uniform\textquotedblright\ distribution from
the \textquotedblleft truly-uniform\textquotedblright\ one. \ In 1990, Linial
and Nisan \cite{ln}\ famously conjectured that $n^{\varepsilon}$\textit{-wise
independence fools }$\mathsf{AC}^{0}$\textit{\ circuits:}

\begin{conjecture}
[Linial-Nisan or LN Conjecture]\label{ln}Let $\mathcal{D}$\ be any
$n^{\Omega\left(  1\right)  }$-wise independent distribution over $\left\{
0,1\right\}  ^{n}$, and let $f:\left\{  0,1\right\}  ^{n}\rightarrow\left\{
0,1\right\}  $\ be computed by an $\mathsf{AC}^{0}$ circuit of size
$2^{n^{o\left(  1\right)  }}$. \ Then%
\[
\left\vert \Pr_{\mathcal{D}}\left[  f\right]  -\Pr_{\mathcal{U}}\left[
f\right]  \right\vert =o\left(  1\right)  .
\]

\end{conjecture}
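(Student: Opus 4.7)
The plan is to reduce the problem to the construction of low-degree \emph{sandwiching polynomials}, and then build those polynomials by combining two classical approximation techniques for $\mathsf{AC}^0$ functions. Concretely, I would aim to produce, for every $f\in\mathsf{AC}^0$ of size $s=2^{n^{o(1)}}$ and every parameter $k=n^{\varepsilon}$, two real multilinear polynomials $p_{l},p_{u}:\{0,1\}^{n}\to\mathbb{R}$ of degree at most $k$ satisfying
\[
p_{l}(x)\;\leq\;f(x)\;\leq\;p_{u}(x)\qquad\text{for every }x\in\{0,1\}^{n},
\]
together with $\operatorname*{E}_{\mathcal{U}}[p_{u}-p_{l}]=o(1)$. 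Once such polynomials exist, the conjecture follows immediately: because $\mathcal{D}$ is $k$-wise independent, the expectation under $\mathcal{D}$ of any multilinear polynomial of degree at most $k$ equals its expectation under $\mathcal{U}$, so $\operatorname*{E}_{\mathcal{D}}[p_{l}]=\operatorname*{E}_{\mathcal{U}}[p_{l}]$ and similarly for $p_{u}$. Sandwiching $f$ between them then gives $|\operatorname*{E}_{\mathcal{D}}[f]-\operatorname*{E}_{\mathcal{U}}[f]|\leq\operatorname*{E}_{\mathcal{U}}[p_{u}-p_{l}]=o(1)$.

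To build $p_{l}$ and $p_{u}$, I would exploit two complementary approximators. First, the Linial--Mansour--Nisan theorem supplies a degree-$\operatorname{polylog}(s)$ polynomial $q$ with $\operatorname*{E}_{\mathcal{U}}[(q-f)^{2}]\leq\varepsilon_{1}$; this is excellent in $L_{2}$ but gives no pointwise guarantee and so cannot be used to sandwich on its own. Second, a Razborov--Smolensky style construction (iterated probabilistic polynomials, one per layer of the circuit, each replacing an unbounded-fanin AND/OR by a low-degree polynomial that errs only on a tiny fraction of inputs) yields a polynomial $r$ of degree $\operatorname{polylog}(s/\varepsilon_{2})$ such that $\Pr_{\mathcal{U}}[r(x)\neq f(x)]\leq\varepsilon_{2}$. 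The natural attempt is to sandwich $f$ by $r\pm\mathbf{1}_{\text{bad}}$, but the indicator of the bad set is not low degree.

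The crux of the argument is to upgrade $r$ into a pair of \emph{pointwise} bounds on $f$ by absorbing the bad set into the polynomial itself. Here I would proceed gate-by-gate: at each level of the circuit, replace the AND/OR of its inputs by two one-sided polynomial approximators (for AND one uses a polynomial that is always $\leq$ AND and equals it on most inputs, and dually for the upper bound), chosen so that every approximation error at a gate can be charged to an error event whose probability is controlled by a Razborov-style deviation bound. Composing these level by level yields sandwiching polynomials whose degrees multiply to $(\log s)^{O(d)}$, well within the $n^{\varepsilon}$ budget, and whose average gap is controlled by a union bound over the gates. Finally, I would use the $L_{2}$ approximator $q$ and Parseval to convert the ``most inputs'' pointwise guarantees into a sharp bound $\operatorname*{E}_{\mathcal{U}}[p_{u}-p_{l}]=o(1)$ rather than a mere $\varepsilon_{2}$-fraction statement.

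The main obstacle I anticipate is the inductive composition of one-sided approximators across depth. Each gate introduces both an approximation error and a blow-up in the degree and in the magnitude of the polynomial, and the naive composition loses an exponential factor in depth. Making the errors at different layers ``decoupled enough'' to be combined by a union bound, while simultaneously keeping the sandwiching property exactly (not just approximately) at every layer, is the delicate step; it is essentially the step where Braverman's proof goes substantially beyond Bazzi's and Razborov's depth-two arguments and where the precise quantitative dependence on $k$ and on the depth $d$ is determined.
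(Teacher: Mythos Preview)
The paper does not prove this statement: Conjecture~\ref{ln} is stated only as background, and its resolution is attributed to Braverman as Theorem~\ref{bravermanthm}. So there is no proof in the paper to compare against; the relevant benchmark is Braverman's argument.

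Your reduction to low-degree sandwiching polynomials is correct and is exactly Bazzi's equivalence, which Braverman also takes as his starting point. Your two ingredients---the LMN $L_{2}$ approximator $q$ and a Razborov--Smolensky probabilistic polynomial $r$---are likewise the two pillars of Braverman's proof. Where your sketch diverges is in the mechanism for turning these into pointwise bounds. The gate-by-gate one-sided approximators you propose face a real obstacle that you partly anticipate: one-sided errors do not compose through alternating layers. An AND of lower bounds is still a lower bound, but the next layer is an OR (equivalently, a negated AND of negations), and a lower bound for $g$ becomes an \emph{upper} bound for $1-g$. Maintaining one-sidedness inductively therefore forces you to carry both an upper and a lower approximator at every gate, and the naive composition of those blows up exactly as you fear.

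Braverman avoids this with a different idea that your plan is missing: the error set $E=\{x:r(x)\neq f(x)\}$ of the Razborov--Smolensky construction is itself the accepting set of a slightly larger $\mathsf{AC}^{0}$ circuit. He can therefore apply LMN \emph{to the indicator of $E$} (not just to $f$), and then combine the low-degree $L_{2}$ approximator of $\mathbf{1}_{E}$ with $r$ and $q$ to manufacture polynomials that behave exactly like $f$ off $E$ and whose contribution on $E$ is controlled in expectation. This ``the error set is again $\mathsf{AC}^{0}$'' bootstrapping is what makes the argument go through across arbitrary constant depth without any layer-by-layer one-sided composition.
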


(The actual parameters in the LN Conjecture are considerably stronger than the
above, but also more complicated to state. \ We chose weaker parameters that
suffice for our discussion.)

After seventeen years of almost no progress, Bazzi \cite{bazzi} finally proved
Conjecture \ref{ln}\ for the special case of depth-$2$ circuits. \ Shortly
afterward, Razborov \cite{razborov:bazzi} gave a dramatically simpler proof of
Bazzi's theorem, and shortly after \textit{that}, Braverman \cite{braverman}
proved the full Conjecture \ref{ln}:

\begin{theorem}
[Braverman's Theorem \cite{braverman}]\label{bravermanthm}Let $f:\left\{
0,1\right\}  ^{n}\rightarrow\left\{  0,1\right\}  $\ be computed by an
$\mathsf{AC}^{0}$ circuit of size $S$\ and depth $d$, and let $\mathcal{D}%
$\ be a $\left(  \log\frac{S}{\varepsilon}\right)  ^{7d^{2}}$-wise independent
distribution over $\left\{  0,1\right\}  ^{n}$.\ \ Then for all sufficiently
large $S$,%
\[
\left\vert \Pr_{\mathcal{D}}\left[  f\right]  -\Pr_{\mathcal{U}}\left[
f\right]  \right\vert \leq\varepsilon.
\]

\end{theorem}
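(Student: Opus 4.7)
The plan is to prove the theorem via the \emph{sandwiching polynomial} method, following the template that Bazzi introduced for depth two and that Razborov streamlined. The first step is a duality/LP reduction: to show that every $k$-wise independent distribution $\mathcal{D}$ fools $f$ to within $\varepsilon$, it suffices to exhibit two multilinear real polynomials $p_{\ell},p_{u}:\{0,1\}^{n}\to\mathbb{R}$ of degree at most $k$ such that $p_{\ell}(x)\leq f(x)\leq p_{u}(x)$ pointwise on $\{0,1\}^{n}$ and $\operatorname{E}_{\mathcal{U}}[p_{u}-p_{\ell}]\leq\varepsilon$. Indeed, any degree-$k$ polynomial has the same expectation under $\mathcal{D}$ as under $\mathcal{U}$, so $\operatorname{E}_{\mathcal{D}}[p_{\ell}]=\operatorname{E}_{\mathcal{U}}[p_{\ell}]\leq \operatorname{E}_{\mathcal{U}}[f]\leq \operatorname{E}_{\mathcal{U}}[p_{u}]=\operatorname{E}_{\mathcal{D}}[p_{u}]$, and by pointwise sandwiching the same inequalities hold for $\operatorname{E}_{\mathcal{D}}[f]$; the $\varepsilon$ bound on the gap then pins the two expectations together.

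The second step is to construct such sandwiching polynomials for an $\mathsf{AC}^{0}$ circuit of size $S$ and depth $d$. For this I would combine two classical approximation techniques, one giving small $L_{2}$ error and one giving small \emph{exceptional set}. Specifically, by a Smolensky/LMN style argument (random restrictions composed layer by layer), one obtains a polynomial $q$ of degree $D=(\log(S/\varepsilon))^{O(d)}$ satisfying $\operatorname{E}_{\mathcal{U}}[(q-f)^{2}]\leq\varepsilon^{3}$. Independently, H\aa stad's switching lemma yields, for a comparable degree bound, a polynomial $r$ such that $\Pr_{\mathcal{U}}[r(x)\neq f(x)]\leq\varepsilon^{3}$; equivalently, there is a low-degree \emph{indicator} polynomial $I$ of the ``bad set'' $B\subseteq\{0,1\}^{n}$ where these approximations fail, with $I\geq\mathbf{1}_{B}$ pointwise and $\operatorname{E}_{\mathcal{U}}[I]$ small.

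Given these ingredients, the sandwich polynomials are built by a Razborov-style patch: set $p_{u}:=q+C\cdot I$ and $p_{\ell}:=q-C\cdot I$ for an appropriate constant $C$, after first ensuring via the $L_{2}$ bound (Chebyshev) that $|q-f|$ is small off the bad set, and that the indicator $I$ absorbs the large deviations on the bad set so the pointwise bounds hold. Choosing the parameters so that $D\cdot(\text{degree of }I)\leq k=(\log(S/\varepsilon))^{7d^{2}}$ and $\operatorname{E}_{\mathcal{U}}[p_{u}-p_{\ell}]\leq\varepsilon$ then completes the argument.

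The main obstacle, and the step where Braverman's contribution is decisive, is obtaining both the $L_{2}$ approximator and the exceptional-set indicator with degree bounds that compose \emph{cleanly} through $d$ levels without blowing up to $2^{O(d)}$ in the exponent. A naive induction on depth applies the switching lemma once per layer and yields degree $(\log S)^{\Omega(d)}$ with a hidden constant that degrades multiplicatively; obtaining the stated $7d^{2}$ exponent requires carefully balancing the restriction parameter against the error budget at each layer, and using the $L_{2}$ approximator as a ``soft'' replacement inside the recursion rather than taking a worst-case pointwise approximator throughout. This bookkeeping is the real heart of the proof; once it is done, plugging into the sandwich reduction above yields the theorem.
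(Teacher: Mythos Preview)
The paper does not give its own proof of this statement. Theorem~\ref{bravermanthm} is quoted in the Background section as a known result due to Braverman \cite{braverman}, and the paper simply cites it without argument; its role here is only to motivate the GLN Conjecture and to contrast with the counterexample that follows. So there is nothing in this paper to compare your proposal against.

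For what it is worth, your sketch is a faithful high-level outline of Braverman's actual proof: the Bazzi LP/duality reduction to low-degree sandwiching polynomials, together with the combination of an $L_{2}$ approximator (LMN/Smolensky) and a small-exceptional-set approximator (switching lemma) patched via an indicator of the bad set. The one point I would flag is that your description of the ``patch'' as $q\pm C\cdot I$ is more schematic than what Braverman does; in his argument one first builds a Boolean function $f'$ that agrees with $f$ outside a small set $E$ and is exactly computed by a low-degree polynomial, then uses the $L_{2}$ concentration of $f'$ to sandwich, with the bad-set correction handled by a separate low-degree upper bound on $\mathbf{1}_{E}$. The $7d^{2}$ exponent arises from the specific interplay of these two degree bounds, not from a generic layer-by-layer recursion, so your final paragraph slightly mislocates where the bookkeeping happens. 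But none of this is relevant to the present paper, which treats the theorem as a black box.
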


Even before the work of Razborov \cite{razborov:bazzi}\ and Braverman
\cite{braverman}, we had proposed a deceptively modest-seeming generalization
of Conjecture \ref{ln}, motivated by the application to the $\mathsf{BQP}$
versus $\mathsf{PH}$\ question mentioned previously. \ To state the
generalization, we need some more terminology. \ Let $X=x_{1}\ldots x_{n}%
\in\left\{  0,1\right\}  ^{n}$ be a string. \ Then a \textit{literal} is an
expression of the form $x_{i}$ or $1-x_{i}$, and a $k$-\textit{term} is a
product of $k$ literals (each involving a different $x_{i}$), which is $1$ if
the literals all take on prescribed values and $0$ otherwise.

\begin{definition}
[almost $k$-wise independence]Given a distribution $\mathcal{D}$ over
$\left\{  0,1\right\}  ^{n}$ and a $k$-term $C$, we say that $C$ is
$\varepsilon$-fooled\ by $\mathcal{D}$\ if%
\[
1-\varepsilon\leq\frac{\Pr_{\mathcal{D}}\left[  C\right]  }{\Pr_{\mathcal{U}%
}\left[  C\right]  }\leq1+\varepsilon.
\]
(Note that $\Pr_{\mathcal{U}}\left[  C\right]  $\ is just $2^{-k}$.) \ Then
$\mathcal{D}$\ is $\varepsilon$\textit{-almost }$k$\textit{-wise independent}
if every $k$-term $C$ is $\varepsilon$-fooled\ by $\mathcal{D}$.
\end{definition}

In other words, there should be no assignment to any $k$ bits, such that
conditioning on that assignment gives us much information about whether $X$
was drawn from $\mathcal{D}$\ or from $\mathcal{U}$. \ We can now state the
conjecture that we falsify.

\begin{conjecture}
[Generalized Linial-Nisan or GLN Conjecture]\label{gln}Let $\mathcal{D}$\ be a
$1/n^{\Omega\left(  1\right)  }$-almost $n^{\Omega\left(  1\right)  }$-wise
independent distribution over $\left\{  0,1\right\}  ^{n}$, and let
$f:\left\{  0,1\right\}  ^{n}\rightarrow\left\{  0,1\right\}  $\ be computed
by an $\mathsf{AC}^{0}$ circuit of size $2^{n^{o\left(  1\right)  }}$. \ Then%
\[
\left\vert \Pr_{\mathcal{D}}\left[  f\right]  -\Pr_{\mathcal{U}}\left[
f\right]  \right\vert =o\left(  1\right)  .
\]

\end{conjecture}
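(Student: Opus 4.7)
The plan is to build an explicit counterexample around the surjectivity function, as hinted by the connection to Beame--Machmouchi. Choose positive integers $M,N$ with $N=\Theta(M\log M)$, tuned so that the fraction of surjections in $[M]^N$ is bounded away from both $0$ and $1$ (a shift by $\pm\Theta(M)$ around the coupon-collector threshold $M\ln M$). Encode each input $X=(x_1,\ldots,x_N)\in[M]^N$ in binary using $n=N\lceil\log M\rceil$ bits. Define
\[
f(X)\;=\;\bigwedge_{j\in[M]}\;\bigvee_{i\in[N]}\;\bigwedge_{\ell=1}^{\log M}\bigl[\,(x_i)_\ell=j_\ell\,\bigr],
\]
so that $f(X)=1$ iff $X$ is surjective onto $[M]$. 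The formula is a depth-three AND/OR/AND circuit of size $\mathrm{poly}(n)$, hence in $\mathsf{AC}^0$, and by construction $\Pr_{\mathcal{U}}[f=1]$ is some constant $c<1$.

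Next, let $\mathcal{D}$ be the uniform distribution over $[M]^N$ conditioned on surjectivity. Then $\Pr_{\mathcal{D}}[f=1]=1$, so $|\Pr_{\mathcal{D}}[f]-\Pr_{\mathcal{U}}[f]|\geq 1-c=\Omega(1)$: the constant bias required to violate Conjecture~\ref{gln} is immediate. All the work is therefore in certifying that $\mathcal{D}$ is $\widetilde{O}(k/n)$-almost $k$-wise independent.

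For an arbitrary $k$-term $C$, Bayes' rule gives
\[
\frac{\Pr_{\mathcal{D}}[C]}{\Pr_{\mathcal{U}}[C]}\;=\;\frac{\Pr_{\mathcal{U}}[\,f=1\mid C\,]}{\Pr_{\mathcal{U}}[\,f=1\,]},
\]
so the task reduces to showing that conditioning on $C$ perturbs the probability of surjectivity by a factor of $1\pm\widetilde{O}(k/n)$. The key observation is that $C$ restricts at most $k$ of the $N$ coordinates $x_i$ (each restricted $x_i$ is merely forced to lie in some subset of $[M]$ determined by the prescribed bits), while the remaining $\geq N-k$ coordinates stay independent and uniform. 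I would write $\Pr_{\mathcal{U}}[f=1]$ and $\Pr_{\mathcal{U}}[f=1\mid C]$ via inclusion--exclusion over the set $T\subseteq[M]$ of potentially missed coupons,
\[
\Pr_{\mathcal{U}}[f=1]\;=\;\sum_{T\subseteq[M]}(-1)^{|T|}\Bigl(1-\tfrac{|T|}{M}\Bigr)^{N},
\]
compare term-by-term with the conditional version (where the $k$ restricted coordinates replace a factor $(1-|T|/M)$ by a slightly different factor depending on how $T$ intersects the allowed subsets), and show that in each term the multiplicative discrepancy is $O(k/M)$. Since $n=\Theta(M\log^2 M)$, this is $\widetilde{O}(k/n)$.

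The main obstacle is the third step: obtaining the tight $\widetilde{O}(k/n)$ bound rather than a loose one. A worst-case $k$-term could concentrate all of its constraints on a single coordinate $x_i$ (forcing it into a subset of size $M/2^k$) or, conversely, spread itself across $k$ distinct coordinates that it pins to specific values, either of which subtly skews the coupon distribution. Controlling both extremes uniformly—so the term-by-term inclusion--exclusion estimate never blows up, especially near $|T|\approx\log M$ where the summands are largest—will require a careful Poissonization of the coupon-collector analysis and the observation that any $k$ "externally covered" coupons can only shift the mean of the missing-coupon count by $O(k)$, which is tiny compared to $M$. Once this analytic estimate is in hand, the counterexample is complete with parameters $k=n^{\Omega(1)}$ and almost-independence $\varepsilon=1/n^{\Omega(1)}$, directly contradicting Conjecture~\ref{gln}.
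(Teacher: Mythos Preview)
Your route differs from the paper's in the choice of distribution: you condition $\mathcal{U}$ on surjectivity (so $\Pr_{\mathcal{D}}[f]=1$), whereas the paper samples $X\sim\mathcal{U}$, picks a uniformly random $y\in[M]$, and replaces every coordinate equal to $y$ by a fresh sample from $[M]\setminus\{y\}$ (so $\Pr_{\mathcal{D}}[f]=0$). That choice is what makes the paper's argument short, because almost $k$-wise independence then follows from a two-line coupling: a proper $k$-term $\Delta(x_{i_1},y_1)\cdots\Delta(x_{i_k},y_k)$ can flip under $X\mapsto\mathcal{D}(X)$ only if the deleted value $y$ lies in $\{y_1,\ldots,y_k\}$, an event of probability at most $k/M$, and the reverse bound comes from an equally simple inverse map $\mathcal{D}^{\operatorname{inv}}$ satisfying $\mathcal{D}^{\operatorname{inv}}(\mathcal{D}(\mathcal{U}))=\mathcal{U}$. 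No inclusion--exclusion or coupon-collector estimates are needed at all.

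Your sketch, by contrast, has a real gap at the inclusion--exclusion step. The assertion that ``in each term the multiplicative discrepancy is $O(k/M)$'' is false as stated: if even one coordinate is pinned to a value $v$, then every $T\ni v$ contributes $0$ to the conditional sum but something nonzero to the unconditional one, so the per-$T$ ratio is $0$, not $1+O(k/M)$. What \emph{is} true is that after grouping by $t=|T|$ the aggregated $t$-th summands have ratio $1+O\bigl((j-|V|)\,t/M\bigr)$, where $j$ is the number of pinned coordinates and $|V|$ the number of distinct pinned values; summing against weights $\approx(-1)^t/t!$ then yields an overall ratio $1+O(j/M)=1+\widetilde{O}(k/n)$. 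So your distribution does work, but to finish you must replace the per-$T$ comparison by a per-$t$ one, control the tail in $t$, and separately handle partially constrained coordinates---noticeably more labor than the paper's coupling buys you for free.
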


Note that, for Conjecture \ref{gln}\ not to be ruled out \textit{immediately},
it is essential that our definition of $\varepsilon$-fooling was
\textit{multiplicative} rather than additive. \ For suppose we had merely
required that, on every subset of indices $S\subseteq\left[  n\right]  $\ with
$\left\vert S\right\vert \leq k$, the marginal distribution $\mathcal{D}_{S}%
$\ was $\varepsilon$-close in variation distance to the uniform distribution.
\ Then it would be easy to construct almost $k$-wise distributions
$\mathcal{D}$\ that were distinguishable from the uniform distribution even by
DNF formulas. \ For example, the uniform distribution over all
sequences\ $X=x_{1}\ldots x_{N}\in\left[  N\right]  ^{N}$\ that are
\textit{permutations} (with the $x_{i}$'s appropriately coded in binary) is
one such $\mathcal{D}$.

This paper shows that, even with the more careful multiplicative definition of
$\varepsilon$-fooling, there is \textit{still} a counterexample to Conjecture
\ref{gln}---although we have to work harder and use higher-depth circuits to
construct it. \ The failure of Conjecture \ref{gln}\ means that Braverman's
Theorem\ is \textquotedblleft essentially optimal,\textquotedblright\ in the
sense that one cannot relax the $k$-wise independence condition to almost
$k$-wise independence. \ This demonstrates a striking contrast between
$k$-wise independence and almost $k$-wise independence in terms of their
implications for pseudorandomness.

\section{The Counterexample\label{COUNTER}}

Fix a positive integer $m$, and let $M:=2^{m}$. \ Then it will be useful think
of the input $X=x_{1}\ldots x_{N}$\ as belonging to the set $\left[  M\right]
^{N}$, where $N:=\left\lceil Mm\ln2\right\rceil $. \ However, to make contact
with the original statement of the GLN Conjecture, we can easily encode such
an $X$\ as an $n$-bit string where $n:=Nm$, by writing out each $x_{i}$\ in
binary. \ Abusing notation, we will speak interchangeably about $X$\ as an
element of$\ \left\{  0,1\right\}  ^{n}$ or of $\left[  M\right]  ^{N}$.

Let the \textit{image} of $X$, or $\operatorname{Im}_{X}:=\left\{
x_{1},\ldots,x_{N}\right\}  $, be the set of integers that appear in $X$.
\ Then define the \textit{surjectivity function}, $f_{\operatorname*{Surj}%
}:\left\{  0,1\right\}  ^{n}\rightarrow\left\{  0,1\right\}  $\ by
$f_{\operatorname*{Surj}}\left(  X\right)  =1$\ if $\operatorname{Im}%
_{X}=\left[  M\right]  $\ and $f_{\operatorname*{Surj}}\left(  X\right)
=0$\ otherwise. \ A first easy observation is that $f_{\operatorname*{Surj}%
}\in\mathsf{AC}^{0}$.

\begin{lemma}
\label{inac0}$f_{\operatorname*{Surj}}$ is computable by an $\mathsf{AC}^{0}%
$\ circuit of depth $3$ and size $O\left(  NMm\right)  $.
\end{lemma}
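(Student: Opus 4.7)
The plan is to exhibit an explicit depth-$3$ formula for $f_{\operatorname{Surj}}$ in the standard AND-OR-AND form and then count its size. The natural identity is
\[
f_{\operatorname{Surj}}(X) \;=\; \bigwedge_{j \in [M]} \bigvee_{i \in [N]} \bigl[\, x_i = j \,\bigr],
\]
where $[\,x_i = j\,]$ is the indicator that the $i$-th block of $m$ input bits, interpreted as an integer in $[M]$, equals $j$. This identity is immediate from the definition: $\operatorname{Im}_X = [M]$ iff for every target value $j$ there is some index $i$ with $x_i = j$.

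Next, I would implement each equality test $[\,x_i = j\,]$ as a single AND gate of fan-in $m$ over the $m$ literals determined by the binary expansion of $j$ (using $x_{i,\ell}$ or $\neg x_{i,\ell}$ according to whether the $\ell$-th bit of $j$ is $1$ or $0$). Plugging these bottom ANDs into the OR-of-ANDs, and then feeding the $M$ resulting ORs into the top AND, yields a formula with alternating AND/OR/AND layers — exactly depth $3$ in the convention of the paper (NOTs at the inputs are free).

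For the size bound I would simply tally the gates and wires: one top AND of fan-in $M$, then $M$ middle ORs each of fan-in $N$, then $MN$ bottom ANDs each of fan-in $m$. The dominant contribution is the wires feeding the bottom layer, giving total size $O(NMm)$, as claimed.

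There is essentially no obstacle here; the only thing to double-check is the depth convention (alternating layers, with NOT gates uncounted), which matches the paper's definition, and that the inputs $x_{i,\ell}$ and their negations are indeed available at the bottom for free.
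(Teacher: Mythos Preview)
Your proposal is correct and essentially identical to the paper's proof: the paper writes $f_{\operatorname{Surj}}(X)=\bigwedge_{y\in[M]}\bigvee_{i\in[N]}\Delta(x_i,y)$ where $\Delta(x_i,y)$ is exactly your $m$-literal AND implementing $[x_i=y]$. Your size accounting is, if anything, more explicit than the paper's.
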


\begin{proof}
For all $i\in\left[  N\right]  $\ and $y\in\left[  M\right]  $, let
$\Delta\left(  x_{i},y\right)  $\ denote the $m$-term that evaluates to $1$ if
$x_{i}=y$\ and to $0$ otherwise. \ Then%
\[
f_{\operatorname*{Surj}}\left(  X\right)  =\bigwedge_{y\in\left[  M\right]
}\bigvee_{i\in\left[  N\right]  }\Delta\left(  x_{i},y\right)  .
\]

\end{proof}

Now let $\mathcal{U}$\ be the uniform distribution over $\left[  M\right]
^{N}$, so that $\Pr_{\mathcal{U}}\left[  X\right]  =1/M^{N}$\ for all
$X\in\left[  M\right]  ^{N}$. \ Also, given an input $X\in\left[  M\right]
^{N}$, we define a distribution $\mathcal{D}\left(  X\right)  $\ over
\textquotedblleft perturbed\textquotedblright\ versions of $X$ via the
following procedure:

\begin{enumerate}
\item[(1)] Choose $y$\ uniformly at random from $\left[  M\right]  $.

\item[(2)] For each $i\in\left[  N\right]  $\ such that $x_{i}=y$, change
$x_{i}$\ to a uniform, independent sample from $\left[  M\right]
\diagdown\left\{  y\right\}  $.
\end{enumerate}

Then we let $\mathcal{D}:=\mathcal{D}\left(  \mathcal{U}\right)  $\ be the
distribution over inputs $Z$\ obtained by first drawing an $X$\ from
$\mathcal{U}$, and then sampling $Z$\ from $\mathcal{D}\left(  X\right)  $.
\ Notice that $\operatorname{Im}_{Z}\neq\left[  M\right]  $\ and hence
$f\left(  Z\right)  =0$\ for all $Z$ in the support of $\mathcal{D}$.

Here is an observation that will be helpful later. \ Given a sample
$Z=z_{1}\ldots z_{N}$\ from $\mathcal{D}$,\ we can define a distribution
$\mathcal{D}^{\operatorname*{inv}}\left(  Z\right)  $\ over
perturbed\ versions of $Z$ via the following \textquotedblleft
inverse\textquotedblright\ procedure:

\begin{enumerate}
\item[(1)] Choose $y$\ uniformly at random from $\left[  M\right]
\diagdown\operatorname{Im}_{Z}$.

\item[(2)] For each $i\in\left[  N\right]  $, change $z_{i}$\ to $y$ with
independent probability $1/M$.
\end{enumerate}

We claim that $\mathcal{D}^{\operatorname*{inv}}$\ is indeed the inverse of
$\mathcal{D}$.

\begin{claim}
\label{invclaim}$\mathcal{D}^{\operatorname*{inv}}\left(  \mathcal{D}\left(
\mathcal{U}\right)  \right)  =\mathcal{U}.$
\end{claim}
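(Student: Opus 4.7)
The plan is to recognize $\mathcal{D}^{\operatorname*{inv}}$ as the Bayesian posterior of the forward process: if we extend $\mathcal{D}$ to output the triple $(X,y,Z)$, where $y$ is the value drawn in step (1) and $Z$ is the perturbation of $X$ produced in step (2), then conditioned on $Z$, the pair $(y,X)$ has exactly the distribution described by steps (1)--(2) of $\mathcal{D}^{\operatorname*{inv}}$. Once this is established, it follows immediately that $\mathcal{D}^{\operatorname*{inv}}(\mathcal{D}(\mathcal{U}))$ reproduces the $X$-marginal of the forward process, which is $\mathcal{U}$.

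To carry this out, I would first write down the joint density of $(X,y,Z)$ produced by the forward process. It factors as
\[
\Pr[X,y,Z] = \frac{1}{M^{N+1}} \prod_{i=1}^{N} \Pr[z_i \mid x_i, y],
\]
which is nonzero precisely when $y \notin \operatorname{Im}_Z$ and $x_i \in \{z_i, y\}$ for every $i$, and in that case equals $(M-1)^{-|\{i : x_i = y\}|}/M^{N+1}$.

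From here I would extract two posteriors. First, by marginalizing out $X$ coordinate-by-coordinate, one checks that conditional on $y$ the entries $z_i$ are independently uniform on $[M] \setminus \{y\}$; equivalently, $\Pr[Z,y] = \mathbb{1}[y \notin \operatorname{Im}_Z]/(M(M-1)^N)$, so $y \mid Z$ is uniform on $[M] \setminus \operatorname{Im}_Z$. This matches step (1) of $\mathcal{D}^{\operatorname*{inv}}$. Second, dividing the full joint by $\Pr[Z,y]$, the product structure over $i$ shows that $X \mid (Z,y)$ has each $x_i$ independently equal to $y$ with probability $1/M$ and equal to $z_i$ with probability $(M-1)/M$. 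This matches step (2) of $\mathcal{D}^{\operatorname*{inv}}$.

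The main place where the argument is delicate is the first posterior: a priori one might worry that observing $Z$ leaks information about which element of $[M] \setminus \operatorname{Im}_Z$ was the ``removed'' value $y$, biasing the posterior away from uniform. The check amounts to verifying that the numerator and denominator of Bayes' rule cancel---essentially because each coordinate of $X$ was uniform to begin with, so the only information $Z$ carries about $y$ is which symbols are absent from $Z$. Once that cancellation is in hand, every other step is mechanical coordinate-wise bookkeeping.
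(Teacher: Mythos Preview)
Your proof is correct and proceeds by a genuinely different route than the paper's. The paper fixes the removed element $y$, observes coordinate-wise that $\mathcal{D}_y^{\operatorname*{inv}}(\mathcal{D}_y(\mathcal{U}))=\mathcal{U}$, and then invokes a symmetry argument at the level of histograms: since $\mathcal{U}$, $\mathcal{D}$, and $\mathcal{D}^{\operatorname*{inv}}$ all respect the permutation symmetry of $[M]$, equality of histogram distributions upgrades to equality of full distributions. You instead carry out a direct Bayesian inversion, computing the joint law of $(X,y,Z)$ and showing that the posterior of $(y,X)$ given $Z$ coincides exactly with the two steps of $\mathcal{D}^{\operatorname*{inv}}$. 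Your approach is more explicit and makes the ``delicate'' point---that $Z$ reveals nothing about $y$ beyond the set $[M]\setminus\operatorname{Im}_Z$---completely transparent via the identity $\Pr[Z,y]=\mathbb{1}[y\notin\operatorname{Im}_Z]/(M(M-1)^N)$. The paper's argument is shorter and avoids any arithmetic, but the histogram-lifting step (in particular why $\operatorname*{Hist}(\mathcal{D}^{\operatorname*{inv}}(\mathcal{D}_y(\mathcal{U})))=\operatorname*{Hist}(\mathcal{D}_y^{\operatorname*{inv}}(\mathcal{D}_y(\mathcal{U})))$) is left somewhat implicit; your computation establishes the same fact without appealing to symmetry at all.
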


\begin{proof}
Let $\mathcal{D}_{y}\left(  X\right)  $\ be the variant of $\mathcal{D}\left(
X\right)  $\ where we fix the element $y\in\left[  M\right]  $\ in step (1),
so that $\mathcal{D}\left(  X\right)  =\operatorname*{E}_{y\in\left[
M\right]  }\mathcal{D}_{y}\left(  X\right)  $. \ Similarly, let $\mathcal{D}%
_{y}^{\operatorname*{inv}}\left(  Z\right)  $\ be the variant of
$\mathcal{D}^{\operatorname*{inv}}\left(  Z\right)  $\ where we fix the
element $y\in\left[  M\right]  \diagdown\operatorname{Im}_{Z}$. \ Then it is
easy to see that, for every fixed $y\in\left[  M\right]  $, we have
$\mathcal{D}_{y}^{\operatorname*{inv}}\left(  \mathcal{D}_{y}\left(
\mathcal{U}\right)  \right)  =\mathcal{U}$. \ For choosing each $x_{i}%
$\ uniformly at random, then changing it randomly if equals $y$, then changing
it \textit{back} to $y$\ with probability $1/M$, is just a more complicated
way of choosing $x_{i}$\ uniformly at random.

Now let $\operatorname*{Hist}\left(  X\right)  $ be the \textit{histogram} of
$X$: that is, the multiset $\left\{  h_{1},\ldots,h_{M}\right\}  $\ where
$h_{y}:=\left\vert \left\{  i:x_{i}=y\right\}  \right\vert $. \ Then we can
conclude from the above that, for every $y\in\left[  M\right]  $,%
\begin{align*}
\operatorname*{Hist}\left(  \mathcal{D}^{\operatorname*{inv}}\left(
\mathcal{D}\left(  \mathcal{U}\right)  \right)  \right)   &
=\operatorname*{Hist}\left(  \mathcal{D}^{\operatorname*{inv}}\left(
\mathcal{D}_{y}\left(  \mathcal{U}\right)  \right)  \right)  \\
&  =\operatorname*{Hist}\left(  \mathcal{D}_{y}^{\operatorname*{inv}}\left(
\mathcal{D}_{y}\left(  \mathcal{U}\right)  \right)  \right)  \\
&  =\operatorname*{Hist}\left(  \mathcal{U}\right)  .
\end{align*}
Call a distribution $\mathcal{A}$\ over $\left[  M\right]  ^{N}$%
\ \textit{symmetric} if $\Pr_{\mathcal{A}}\left[  X\right]  $\ depends only on
$\operatorname*{Hist}\left(  X\right)  $. \ Notice that $\mathcal{U}$\ is
symmetric, and that if $\mathcal{A}$\ is symmetric, then $\mathcal{D}\left(
\mathcal{A}\right)  $\ and $\mathcal{D}^{\operatorname*{inv}}\left(
\mathcal{A}\right)  $\ are both symmetric also. \ This means that from
$\operatorname*{Hist}\left(  \mathcal{D}^{\operatorname*{inv}}\left(
\mathcal{D}\left(  \mathcal{U}\right)  \right)  \right)  =\operatorname*{Hist}%
\left(  \mathcal{U}\right)  $, we can conclude that $\mathcal{D}%
^{\operatorname*{inv}}\left(  \mathcal{D}\left(  \mathcal{U}\right)  \right)
=\mathcal{U}$ as well.
\end{proof}

We now show that the function $f_{\operatorname*{Surj}}$ distinguishes
$\mathcal{D}$\ from $\mathcal{U}$ with constant bias.

\begin{lemma}
\label{dist}$\operatorname*{E}_{\mathcal{U}}\left[  f_{\operatorname*{Surj}%
}\right]  -\operatorname*{E}_{\mathcal{D}}\left[  f_{\operatorname*{Surj}%
}\right]  \geq1/e-o\left(  1\right)  .$
\end{lemma}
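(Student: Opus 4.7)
The plan is to evaluate the two expectations separately. The expectation under $\mathcal{D}$ is immediate: by construction, the perturbation step replaces every coordinate equal to the chosen $y \in [M]$ by a value in $[M] \setminus \{y\}$, so $y \notin \operatorname{Im}_Z$ for every $Z$ in the support of $\mathcal{D}$. Hence $f_{\operatorname*{Surj}}(Z) = 0$ identically and $\operatorname{E}_{\mathcal{D}}[f_{\operatorname*{Surj}}] = 0$. The task thus reduces to showing the classical coupon-collector lower bound $\Pr_{X \sim \mathcal{U}}[\operatorname{Im}_X = [M]] \geq 1/e - o(1)$, where $X$ is a uniformly random word in $[M]^N$ with $N = \lceil Mm \ln 2 \rceil$ and $M = 2^m$.

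For this, I would apply inclusion-exclusion to the events $A_y = \{y \notin \operatorname{Im}_X\}$. Since $\Pr_{\mathcal{U}}[\bigcap_{y \in S} A_y] = (1 - |S|/M)^N$, we have
\[
\Pr_{\mathcal{U}}[\operatorname{Im}_X = [M]] \;=\; \sum_{k=0}^{M} (-1)^k \binom{M}{k}\left(1 - \tfrac{k}{M}\right)^N.
\]
The parameter $N = \lceil Mm \ln 2 \rceil$ has been tuned precisely so that $(1 - k/M)^N = 2^{-km}(1 + o(1)) = M^{-k}(1 + o(1))$ for $k$ not too large, while $\binom{M}{k} = M^k/k!\,(1 + O(k^2/M))$. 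Together these give $\binom{M}{k}(1 - k/M)^N = 1/k!\,(1 + o(1))$, so the partial sums approach $\sum_{k \geq 0} (-1)^k/k! = e^{-1}$. The tail terms $k = \omega(1)$ contribute negligibly because $(1-k/M)^N \leq e^{-kN/M} = 2^{-km}$ decays geometrically faster than $\binom{M}{k}$ grows, so the alternating series is dominated term by term.

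The only real work is controlling the error terms uniformly so the truncation is genuinely $o(1)$; this is a routine coupon-collector estimate, and if one prefers a more conceptual route, the same conclusion follows from the standard Poisson convergence of the number of missed coupons (e.g.\ via Chen-Stein with expected number of missed elements $M(1-1/M)^N \to 1$), giving $\Pr[\text{no element missed}] \to \Pr[\operatorname{Poisson}(1) = 0] = 1/e$. I do not anticipate a real obstacle here: the delicate part of the construction was already done in defining $\mathcal{D}$ so that it always violates surjectivity, and in tuning $N$ so that $\mathcal{U}$ is surjective with probability exactly on the order of $1/e$.
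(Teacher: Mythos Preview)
Your proposal is correct and follows essentially the same approach as the paper. The paper's proof is a two-line invocation of the Poisson approximation for the coupon-collector probability $\Pr_{\mathcal{U}}[|\operatorname{Im}_X|=M]\to 1/e$ (together with the immediate observation $\operatorname{E}_{\mathcal{D}}[f_{\operatorname{Surj}}]=0$); you simply unpack that invocation via the inclusion--exclusion expansion and then also mention the Poisson/Chen--Stein route, so the underlying argument is the same, just given in more detail.
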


\begin{proof}
By construction, we have $\operatorname*{E}_{\mathcal{D}}\left[
f_{\operatorname*{Surj}}\right]  =0$. \ On the other hand,%
\[
\operatorname*{E}_{\mathcal{U}}\left[  f_{\operatorname*{Surj}}\right]
=\Pr_{\mathcal{U}}\left[  \left\vert \operatorname{Im}_{X}\right\vert
=M\right]  .
\]
Think of $N=M\ln M+O\left(  1\right)  $ balls, which are thrown uniformly and
independently into $M$ bins.\ \ Then $\left\vert \operatorname{Im}%
_{X}\right\vert $\ is just the number of bins that receive at least one ball.
\ Using the Poisson approximation, we have%
\[
\lim_{M\rightarrow\infty}\Pr_{\mathcal{U}}\left[  \left\vert \operatorname{Im}%
_{X}\right\vert =M\right]  =\frac{1}{e},
\]
and therefore $\operatorname*{E}_{\mathcal{U}}\left[  f_{\operatorname*{Surj}%
}\right]  \geq1/e-o\left(  1\right)  $.
\end{proof}

To show that the distribution $\mathcal{D}$\ is almost $k$-wise independent,
we first need a technical claim, to the effect that almost $k$-wise
independence behaves well with\ respect to restrictions. \ Given a $k$-term
$C$, let $V\left(  C\right)  $\ be the set of variables that occur in $C$.
\ Also, given a set $S$ of variables that contains $V\left(  C\right)  $, let
$U_{S}\left(  C\right)  $\ be the set of all $2^{\left\vert S\right\vert -k}$
terms $B$\ such that $V\left(  B\right)  =S$ and $B\Longrightarrow C$.

\begin{claim}
\label{induc}Given a $k$-term $C$ and a set $S$ containing $V\left(  C\right)
$, suppose every term $B\in U_{S}\left(  C\right)  $\ is $\varepsilon$-fooled
by $\mathcal{D}$. \ Then $C$ is $\varepsilon$-fooled\ by $\mathcal{D}$.
\end{claim}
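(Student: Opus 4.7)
The plan is to observe that $U_S(C)$ refines $C$ into a disjoint union of events, and that multiplicative $\varepsilon$-fooling is preserved under such disjoint unions.

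First I would unpack the definitions. Since each $B \in U_S(C)$ has $V(B) = S \supseteq V(C)$ and $B \Longrightarrow C$, the $2^{|S|-k}$ terms in $U_S(C)$ correspond bijectively to the $2^{|S|-k}$ assignments to the variables in $S \setminus V(C)$. In particular, for any distribution $\mathcal{A}$ on $\{0,1\}^n$, the events $\{B = 1\}$ for $B \in U_S(C)$ partition the event $\{C = 1\}$, so
\[
\Pr_{\mathcal{A}}[C] \;=\; \sum_{B \in U_S(C)} \Pr_{\mathcal{A}}[B].
\]
This holds for both $\mathcal{A} = \mathcal{D}$ and $\mathcal{A} = \mathcal{U}$.

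Next, I would apply the hypothesis term by term. Since every $B \in U_S(C)$ is $\varepsilon$-fooled by $\mathcal{D}$, we have
\[
(1-\varepsilon)\Pr_{\mathcal{U}}[B] \;\leq\; \Pr_{\mathcal{D}}[B] \;\leq\; (1+\varepsilon)\Pr_{\mathcal{U}}[B]
\]
for each such $B$. Summing these inequalities over all $B \in U_S(C)$ and using the identity from the previous step yields
\[
(1-\varepsilon)\Pr_{\mathcal{U}}[C] \;\leq\; \Pr_{\mathcal{D}}[C] \;\leq\; (1+\varepsilon)\Pr_{\mathcal{U}}[C],
\]
which is exactly the definition of $C$ being $\varepsilon$-fooled by $\mathcal{D}$ (dividing through by $\Pr_{\mathcal{U}}[C] = 2^{-k} > 0$).

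There is no real obstacle here: the claim is essentially the routine observation that the multiplicative $\varepsilon$-fooling condition, being a two-sided linear bound on $\Pr_{\mathcal{D}}[B]$ in terms of $\Pr_{\mathcal{U}}[B]$, is closed under taking nonnegative linear combinations of events, and in particular under disjoint unions. The only thing one needs to be careful about is that the refinement $U_S(C)$ really is a disjoint decomposition of $C$ into equal-length terms on the variable set $S$, which is immediate from the definition.
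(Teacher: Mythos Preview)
Your proof is correct and uses essentially the same idea as the paper: decompose $C$ as a disjoint union of the terms in $U_S(C)$ and observe that the multiplicative $\varepsilon$-fooling bound is preserved under summing. The only cosmetic difference is that the paper carries this out by induction on $|S|-k$, adding one variable at a time and using the mediant inequality, whereas you do the whole sum at once; your version is slightly more direct.
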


\begin{proof}
It suffices to check the claim in the case $\left\vert S\right\vert =k+1$,
since we can then use induction on $k$. \ Let $S=V\left(  C\right)
\cup\left\{  x\right\}  $ for some variable $x\notin V\left(  C\right)  $.
\ Then $U_{S}\left(  C\right)  $\ contains two terms: $C_{0}:=C\wedge
\overline{x}$\ and $C_{1}:=C\wedge x$. \ By the law of total probability, we
have $\Pr_{\mathcal{D}}\left[  C\right]  =\Pr_{\mathcal{D}}\left[
C_{0}\right]  +\Pr_{\mathcal{D}}\left[  C_{1}\right]  $\ and $\Pr
_{\mathcal{U}}\left[  C\right]  =\Pr_{\mathcal{U}}\left[  C_{0}\right]
+\Pr_{\mathcal{U}}\left[  C_{1}\right]  $. \ Hence%
\[
\min\left\{  \frac{\Pr_{\mathcal{D}}\left[  C_{0}\right]  }{\Pr_{\mathcal{U}%
}\left[  C_{0}\right]  },\frac{\Pr_{\mathcal{D}}\left[  C_{1}\right]  }%
{\Pr_{\mathcal{U}}\left[  C_{1}\right]  }\right\}  \leq\frac{\Pr_{\mathcal{D}%
}\left[  C\right]  }{\Pr_{\mathcal{U}}\left[  C\right]  }\leq\max\left\{
\frac{\Pr_{\mathcal{D}}\left[  C_{0}\right]  }{\Pr_{\mathcal{U}}\left[
C_{0}\right]  },\frac{\Pr_{\mathcal{D}}\left[  C_{1}\right]  }{\Pr
_{\mathcal{U}}\left[  C_{1}\right]  }\right\}  .
\]
So if $C_{0}$\ and $C_{1}$\ are both $\varepsilon$-fooled by $\mathcal{D}$,
then $C$\ is $\varepsilon$-fooled as well.
\end{proof}

Given an input $X=x_{1}\ldots x_{N}$, recall that $\Delta\left(
x_{i},y\right)  $\ denotes a term that evaluates to $1$ if $x_{i}=y$, and to
$0$ if $x_{i}\neq y$. \ Then let a \textit{proper }$k$\textit{-term}\ $C$\ be
a product of the form $\Delta\left(  x_{i_{1}},y_{1}\right)  \cdot\cdots
\cdot\Delta\left(  x_{i_{k}},y_{k}\right)  $, where $1\leq i_{1}<\cdots
<i_{k}\leq N$\ and $y_{1},\ldots,y_{k}\in\left[  M\right]  $.

We now prove the central fact, that $\mathcal{D}$\ is almost $k$-wise independent.

\begin{lemma}
\label{indep}$\mathcal{D}$\ is $2k/M$-almost $k$-wise independent for all
$k\leq M/2$.
\end{lemma}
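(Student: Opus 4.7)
The plan is to use Claim \ref{induc} to reduce from an arbitrary bit-level $k$-term to a proper $j$-term with $j \leq k$, then to compute the ratio $\Pr_{\mathcal{D}}[C]/\Pr_{\mathcal{U}}[C]$ exactly for such a proper term by conditioning on the random $y \in [M]$ chosen in step (1) of the perturbation procedure.

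First I would invoke Claim \ref{induc}: given an arbitrary $k$-term $C$ on the bits, let $S$ be the union of all $m$ bits of each $x_i$ mentioned by $C$, so $|S| \leq km$. Every $B \in U_S(C)$ is then a proper $j$-term $\Delta(x_{i_1},y_1)\cdots\Delta(x_{i_j},y_j)$ for some $j \leq k$ (the number of distinct $x_i$'s touched by $C$), so it suffices to show that every proper $j$-term is $(2j/M)$-fooled by $\mathcal{D}$, which is at most $2k/M$.

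Next I would compute $\Pr_{\mathcal{D}}[C]$ for a proper $k$-term $C = \prod_{\ell=1}^{k}\Delta(x_{i_\ell},y_\ell)$ by conditioning on the $y$ drawn in the perturbation. Given $y$, each output coordinate $Z_{i_\ell}$ depends only on $x_{i_\ell}$ and its own independent replacement draw, so the $Z_{i_\ell}$'s are mutually independent. The events $x_{i_\ell} = y_\ell$ (no replacement) and $x_{i_\ell} = y$ (replacement lands on $y_\ell$) combine to give
\[
\Pr[Z_{i_\ell} = y_\ell \mid y] = \begin{cases} 0 & \text{if } y = y_\ell, \\ 1/(M-1) & \text{if } y \neq y_\ell. \end{cases}
\]
Letting $t = |\{y_1,\ldots,y_k\}|$ and using $\Pr_{\mathcal{U}}[C] = 1/M^k$, summing over $y$ then yields
\[
\Pr_{\mathcal{D}}[C] = \frac{M-t}{M(M-1)^{k}}, \qquad \frac{\Pr_{\mathcal{D}}[C]}{\Pr_{\mathcal{U}}[C]} = \frac{M-t}{M}\left(\frac{M}{M-1}\right)^{k}.
\]

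Finally I would show this ratio lies in $[1 - 2k/M,\,1 + 2k/M]$. For the lower bound, $t \leq k$ and $(M/(M-1))^{k} \geq 1$ give ratio $\geq 1 - k/M$. For the upper bound, $t \geq 1$ gives ratio $\leq (M/(M-1))^{k-1} \leq e^{(k-1)/(M-1)}$; the hypothesis $k \leq M/2$ forces the exponent to be at most $1/2$, so $e^{x} \leq 1 + 2x$ on $[0,1]$ combined with $(k-1)/(M-1) \leq k/M$ yields ratio $\leq 1 + 2k/M$. I do not foresee a serious obstacle; the only delicate step is tracking constants carefully enough in the upper bound so that $2k/M$ (rather than, say, $3k/M$) comes out. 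The invocation of Claim \ref{induc} is there purely to bridge the mismatch between bit-level $k$-terms (as in the definition of almost $k$-wise independence) and the symbol-level structure of $\mathcal{D}$.
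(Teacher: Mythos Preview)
Your proof is correct and complete; the reduction via Claim~\ref{induc} and the bounds on the ratio are all sound.

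The approach differs from the paper's in the central step. The paper does not compute $\Pr_{\mathcal{D}}[C]$ exactly; instead it uses a \emph{coupling argument}. For the lower bound it observes that if $C(X)=1$ then $C(Z)=0$ only when the perturbation's random $y$ coincides with some $y_j$, so a union bound gives $\Pr_{\mathcal{D}}[C]\ge(1-k/M)\Pr_{\mathcal{U}}[C]$. For the upper bound it invokes Claim~\ref{invclaim} (that $\mathcal{D}^{\operatorname{inv}}(\mathcal{D}(\mathcal{U}))=\mathcal{U}$) to run the same argument in reverse, obtaining $\Pr_{\mathcal{U}}[C]\ge(1-k/M)\Pr_{\mathcal{D}}[C]$; the bound $1/(1-k/M)\le 1+2k/M$ for $k\le M/2$ then finishes. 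Your route is more elementary---it avoids the inverse construction and Claim~\ref{invclaim} entirely, and yields the exact closed form $\frac{M-t}{M}\bigl(\tfrac{M}{M-1}\bigr)^{k}$. The paper's coupling argument, on the other hand, is what extends cleanly to the DNF setting of Lemma~\ref{dnflem}, where one cannot simply sum over terms because of overlaps; your exact computation would not directly give that extension.
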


\begin{proof}
Notice that a Boolean $k$-term can involve bits from at most $k$ different
$x_{i}$'s. \ So by Claim \ref{induc}, to show that any Boolean $k$-term is
$\varepsilon$-fooled by $\mathcal{D}$, it suffices to show that any
\textit{proper} $k$-term%
\[
C=\Delta\left(  x_{i_{1}},y_{1}\right)  \cdot\cdots\cdot\Delta\left(
x_{i_{k}},y_{k}\right)
\]
is $\varepsilon$-fooled by $\mathcal{D}$.

We first upper-bound $\Pr_{\mathcal{D}}\left[  C\right]  $. \ Recall that to
sample an input $Z$\ from the distribution $\mathcal{D}$, we first sample an
$X$\ from $\mathcal{U}$, and then sample $Z$\ from $\mathcal{D}\left(
X\right)  $. \ Suppose $C\left(  X\right)  =1$. \ Then the only way we can get
$C\left(  Z\right)  =0$\ is if, when we perturb the input $X$ to obtain
$\mathcal{D}\left(  X\right)  $, some $\Delta\left(  x_{i_{j}},y_{j}\right)
$\ changes from TRUE to FALSE. \ But for each $j\in\left[  k\right]  $, this
can happen only if $y=y_{j}$, which occurs with probability $1/M$. \ So by the
union bound,%
\[
\Pr_{\mathcal{D}}\left[  C\right]  \geq\Pr_{\mathcal{U}}\left[  C\right]
\cdot\left(  1-\frac{k}{M}\right)  .
\]

We can similarly upper-bound $\Pr_{\mathcal{U}}\left[  C\right]  $. \ By Claim
\ref{invclaim}, to sample an input $X$\ from $\mathcal{U}$, we can first
sample a $Z$\ from $\mathcal{D}$, and then sample $X$ from $\mathcal{D}%
^{\operatorname*{inv}}\left(  Z\right)  $. \ Suppose $C\left(  Z\right)  =1$.
\ Then we can only get $C\left(  X\right)  =0$\ if, when we perturb $Z$ to
$\mathcal{D}^{\operatorname*{inv}}\left(  Z\right)  $, some $\Delta\left(
z_{i_{j}},y_{j}\right)  $\ changes from TRUE to FALSE. \ But each $z_{i}%
$\ changes with probability at most $1/M$. \ So by the union bound,%
\[
\Pr_{\mathcal{U}}\left[  C\right]  \geq\Pr_{\mathcal{D}}\left[  C\right]
\cdot\left(  1-\frac{k}{M}\right)  .
\]

Combining the upper and lower bounds, and using the fact that $k\leq M/2$, we
have%
\[
1-\frac{k}{M}\leq\frac{\Pr_{\mathcal{D}}\left[  C\right]  }{\Pr_{\mathcal{U}%
}\left[  C\right]  }\leq1+\frac{2k}{M}.
\]

\end{proof}

Combining Lemmas \ref{inac0}, \ref{dist}, and \ref{indep}, and recalling that
$n=Nm$, we obtain the following.

\begin{theorem}
\label{glnfalse}Conjecture \ref{gln}\ (the GLN Conjecture) is false. \ Indeed,
there exists a family of Boolean functions $f_{\operatorname*{Surj}}:\left\{
0,1\right\}  ^{n}\rightarrow\left\{  0,1\right\}  $, computable by
$\mathsf{AC}^{0}$ circuits\ of size $O\left(  n^{2}\right)  $, depth $3$, and
bottom fanin $O\left(  \log n\right)  $, as well as an $O\left(  \left(
k\log^{2}n\right)  /n\right)  $-almost $k$-wise independent distribution
$\mathcal{D}$\ over $\left\{  0,1\right\}  ^{n}$, such that $\operatorname*{E}%
_{\mathcal{D}}\left[  f_{\operatorname*{Surj}}\right]  -\operatorname*{E}%
_{\mathcal{U}}\left[  f_{\operatorname*{Surj}}\right]  =\Omega\left(
1\right)  $.
\end{theorem}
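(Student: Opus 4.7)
The plan is simply to choose the parameters $m$, $M$, $N$ appropriately and then read off the three quantitative bounds from Lemmas \ref{inac0}, \ref{dist}, and \ref{indep}. The natural choice is to take $m \approx \log n$, so that $M = 2^m$ is polynomial in $n$. More precisely, I would choose $m$ so that $M$ is on the order of $n/\log^2 n$, i.e.\ set $m$ implicitly by the equation $n = Nm = \lceil Mm\ln 2\rceil \cdot m$, which yields $M = \Theta(n/\log^2 n)$ and $N = \Theta(n/\log n)$.

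With these parameters fixed, Lemma \ref{inac0} gives a depth-$3$ circuit of size $O(NMm) = O(M^2 m^2) = O(n^2)$, and its bottom fanin is exactly $m = O(\log n)$, since each $\Delta(x_i,y)$ is an $m$-term over the binary encoding. Lemma \ref{dist} provides the constant distinguishing bias: $\operatorname*{E}_{\mathcal{U}}[f_{\operatorname*{Surj}}] - \operatorname*{E}_{\mathcal{D}}[f_{\operatorname*{Surj}}] \geq 1/e - o(1)$, so (replacing $f_{\operatorname*{Surj}}$ by its negation if one insists on the sign written in the theorem) the absolute difference is $\Omega(1)$. Finally, Lemma \ref{indep} tells us $\mathcal{D}$ is $(2k/M)$-almost $k$-wise independent for $k \leq M/2$, and substituting our $M = \Theta(n/\log^2 n)$ gives the stated $O((k\log^2 n)/n)$-almost $k$-wise independence bound.

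The only thing I would double-check carefully is the conversion between the \textquotedblleft large-alphabet\textquotedblright\ formulation (inputs in $[M]^N$) and the Boolean formulation (inputs in $\{0,1\}^n$). Specifically, I want to confirm that Lemma \ref{indep}, which is stated and proved about $k$-terms \emph{over the Boolean encoding}, really does apply with the parameter $k$ interpreted in the Boolean sense: this is fine because the reduction via Claim \ref{induc} to proper $k$-terms only uses that each proper $k$-term involves at most $k$ distinct variables $x_i$, which is certainly implied by involving at most $k$ Boolean bits. After this sanity check, the theorem follows by just combining the three lemmas and substituting the parameters.

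There is no real \textquotedblleft hard part\textquotedblright\ left at this stage; all the content is in Lemmas \ref{dist} and \ref{indep}. The only mild subtlety is to choose $m$ to balance two competing desiderata---we want $M$ large (to get strong almost $k$-wise independence, i.e.\ small $2k/M$) but also want $M$ not too large compared to $n$ (since the circuit size is $O(NMm) = \Theta(M^2 m^2)$ and must be polynomial in $n$). The choice $m = \Theta(\log n)$ is the sweet spot that simultaneously yields polynomial circuit size, $O(\log n)$ bottom fanin, and $O((k\log^2 n)/n)$-almost $k$-wise independence, giving exactly the parameters claimed in Theorem \ref{glnfalse}.
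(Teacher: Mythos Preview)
Your proposal is correct and matches the paper's own argument exactly: the paper's proof of Theorem \ref{glnfalse} is literally the one sentence ``Combining Lemmas \ref{inac0}, \ref{dist}, and \ref{indep}, and recalling that $n=Nm$,'' and your write-up just makes the parameter bookkeeping ($M=\Theta(n/\log^2 n)$, $m=\Theta(\log n)$, circuit size $O(NMm)=O(n^2)$, bottom fanin $m$, and $2k/M = O((k\log^2 n)/n)$) explicit. The one framing quibble is that you speak of ``choosing'' $m$ to balance desiderata, whereas in the construction $m$ is the single free parameter and $n=Nm$ is determined by it---but this is cosmetic and your calculations are right.
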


\section{Random Oracle Separations\label{RORACLE}}

In this section, we reuse the function $f_{\operatorname*{Surj}}$\ and
distribution $\mathcal{D}$\ from Section \ref{COUNTER}\ to show that $\left(
\mathsf{\Pi}_{2}^{p}\right)  ^{A}\not \subset \mathsf{P}^{\mathsf{NP}^{A}}%
$\ with probability $1$\ relative to a random oracle $A$. \ The central
observation here is simply that $\mathcal{D}$\ has support on a
\textit{constant} fraction of $\left[  M\right]  ^{N}$---and that therefore,
any algorithm that computes $f_{\operatorname*{Surj}}\left(  X\right)  $ on a
$1-\varepsilon$ fraction of inputs $X\in\left[  M\right]  ^{N}$\ must also
distinguish $\mathcal{D}$\ from $\mathcal{U}$\ with constant bias. \ The
following lemma makes this implication precise.

\begin{lemma}
\label{dvsu}Let $B$ be a random variable such that $\Pr_{\mathcal{U}}\left[
B=f_{\operatorname*{Surj}}\right]  \geq0.92$. \ Then $\Pr_{\mathcal{U}}\left[
B\right]  -\Pr_{\mathcal{D}}\left[  B\right]  \geq0.022-o\left(  1\right)  $.
\end{lemma}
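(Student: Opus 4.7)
\emph{Proof plan.} The plan is to reduce the task to three quantities via the triangle-style inequality
\[
\Pr_{\mathcal{U}}\left[  B\right]  -\Pr_{\mathcal{D}}\left[  B\right]  \geq
\left(  \Pr_{\mathcal{U}}\left[  f_{\operatorname*{Surj}}\right]  -\Pr
_{\mathcal{D}}\left[  f_{\operatorname*{Surj}}\right]  \right)  -\Pr
_{\mathcal{U}}\left[  B\neq f_{\operatorname*{Surj}}\right]  -\Pr
_{\mathcal{D}}\left[  B\neq f_{\operatorname*{Surj}}\right]  ,
\]
which follows from $\left\vert \Pr_{\mathcal{D}^{\prime}}\left[  B\right]  -\Pr
_{\mathcal{D}^{\prime}}\left[  f\right]  \right\vert \leq\Pr_{\mathcal{D}
^{\prime}}\left[  B\neq f\right]  $ applied at $\mathcal{D}^{\prime}
=\mathcal{U}$ and $\mathcal{D}^{\prime}=\mathcal{D}$.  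By Lemma \ref{dist} the
first parenthesized term is at least $1/e-o\left(  1\right)  \approx0.368$,
and by hypothesis $\Pr_{\mathcal{U}}\left[  B\neq f_{\operatorname*{Surj}
}\right]  \leq0.08$.  Everything therefore reduces to upper-bounding
$\Pr_{\mathcal{D}}\left[  B\neq f_{\operatorname*{Surj}}\right]  $, which
equals $\Pr_{\mathcal{D}}\left[  B=1\right]  $ because
$f_{\operatorname*{Surj}}\equiv0$ on the support of $\mathcal{D}$.

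To bound $\Pr_{\mathcal{D}}\left[  B=1\right]  $ I would first establish a
pointwise density ratio.  Observing that the subdistribution $\mathcal{D}
_{y}\left(  \mathcal{U}\right)  $ appearing in the proof of Claim
\ref{invclaim} is simply the uniform distribution on $\left(  \left[  M\right]
\setminus\left\{  y\right\}  \right)  ^{N}$, the decomposition $\mathcal{D}
=\frac{1}{M}\sum_{y\in\left[  M\right]  }\mathcal{D}_{y}\left(  \mathcal{U}
\right)  $ yields after a direct calculation
\[
\Pr_{\mathcal{D}}\left[  Z\right]  =\frac{\left\vert T_{Z}\right\vert
}{M\left(  M-1\right)  ^{N}}\cdot\mathbf{1}\left[  f_{\operatorname*{Surj}
}\left(  Z\right)  =0\right]  ,\qquad T_{Z}:=\left[  M\right]  \setminus
\operatorname{Im}_{Z}.
\]
Since $N=M\ln M+O\left(  1\right)  $ we have $\left(  1-1/M\right)
^{N}=\left(  1+o\left(  1\right)  \right)  /M$, so $\Pr_{\mathcal{D}}\left[
Z\right]  /\Pr_{\mathcal{U}}\left[  Z\right]  =\left(  1+o\left(  1\right)
\right)  \left\vert T_{Z}\right\vert $ on the support of $\mathcal{D}$;
summing against $B$,
\[
\Pr_{\mathcal{D}}\left[  B=1\right]  =\left(  1+o\left(  1\right)  \right)
\operatorname*{E}_{\mathcal{U}}\left[  \left\vert T_{X}\right\vert \cdot
B\left(  X\right)  \cdot\mathbf{1}\left[  f_{\operatorname*{Surj}}\left(
X\right)  =0\right]  \right]  .
\]

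Finally I would bound this expectation using rearrangement together with
Poisson concentration of $\left\vert T_{X}\right\vert $.  Under $\mathcal{U}$,
$\left\vert T_{X}\right\vert $ is the number of empty bins when $N\approx
M\ln M$ balls land in $M$ bins, which is asymptotically Poisson of mean $1$;
in particular $\Pr_{\mathcal{U}}\left[  \left\vert T_{X}\right\vert
\geq3\right]  \rightarrow1-5/\left(  2e\right)  \approx0.080$ and
$\operatorname*{E}_{\mathcal{U}}\left[  \left\vert T_{X}\right\vert
\mathbf{1}\left[  \left\vert T_{X}\right\vert \geq3\right]  \right]
\rightarrow1-2/e\approx0.264$.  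Since the event $E:=\left\{  B=1,\,
f_{\operatorname*{Surj}}=0\right\}  $ satisfies $\Pr_{\mathcal{U}}\left[
E\right]  \leq0.08$, the layer-cake (rearrangement) inequality says that
$\operatorname*{E}_{\mathcal{U}}\left[  \left\vert T_{X}\right\vert
\mathbf{1}_{E}\right]  $ is maximized when $E$ picks out the highest-$\left\vert
T_{X}\right\vert $ inputs, so $\Pr_{\mathcal{D}}\left[  B=1\right]  \leq
1-2/e+o\left(  1\right)  $.  Combining,
\[
\Pr_{\mathcal{U}}\left[  B\right]  -\Pr_{\mathcal{D}}\left[  B\right]  \geq
\tfrac{1}{e}-0.08-\left(  1-\tfrac{2}{e}\right)  -o\left(  1\right)
=\tfrac{3}{e}-1.08-o\left(  1\right)  \geq0.022-o\left(  1\right)  .
\]

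The main obstacle is the density-ratio calculation: once the clean formula
$\Pr_{\mathcal{D}}\left[  Z\right]  =\left(  1+o\left(  1\right)  \right)
\left\vert T_{Z}\right\vert \Pr_{\mathcal{U}}\left[  Z\right]  $ is in hand,
the rearrangement plus Poisson-tail step is routine.  The only lingering
subtlety is verifying that the Poisson-approximation error for $\left\vert
T_{X}\right\vert $ is actually $o\left(  1\right)  $ uniformly in the small
tail we care about; this follows from standard quantitative balls-and-bins
estimates, since $\left\vert T_{X}\right\vert \leq M$ and higher moments of
$\left\vert T_{X}\right\vert $ are $O\left(  1\right)  $.
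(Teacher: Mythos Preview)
Your proposal is correct and follows essentially the same approach as the paper. Both arguments hinge on the same two ingredients: (i) the pointwise density ratio $\Pr_{\mathcal{D}}[Z]/\Pr_{\mathcal{U}}[Z]\approx |T_Z|$ (the paper phrases this as ``the ratio on level $|\operatorname{Im}_X|=M-k$ is $k$, and $\mathcal{D}$ and $\mathcal{U}$ agree once conditioned on $|\operatorname{Im}_X|$''), and (ii) a rearrangement step that, subject to $\Pr_{\mathcal{U}}[B=1,\,f_{\operatorname*{Surj}}=0]\le 0.08<1-5/(2e)$, pushes $B$ onto the largest-$|T_X|$ layers to get $\Pr_{\mathcal{D}}[B]\le 1-2/e+o(1)$. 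Your triangle-inequality framing and the explicit formula $\Pr_{\mathcal{D}}[Z]=|T_Z|/(M(M-1)^N)$ via $\mathcal{D}_y(\mathcal{U})=\mathrm{Unif}\bigl(([M]\setminus\{y\})^N\bigr)$ are cosmetic repackagings of the paper's direct bounds on $\Pr_{\mathcal{U}}[B]$ and $\Pr_{\mathcal{D}}[B]$, and the final arithmetic $3/e-1.08>0.022$ matches the paper's $0.287-0.265$.
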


\begin{proof}
For convenience, let us adopt the convention that all probabilities are
implicitly the limiting probabilities as $m\rightarrow\infty$; this introduces
at most an $o\left(  1\right)  $ additive error. \ Then $\Pr_{\mathcal{U}%
}\left[  f_{\operatorname*{Surj}}\right]  =1/e$, so%
\[
\Pr_{\mathcal{U}}\left[  B\right]  \geq\Pr_{\mathcal{U}}\left[
f_{\operatorname*{Surj}}\right]  -\Pr_{\mathcal{U}}\left[  B\neq
f_{\operatorname*{Surj}}\right]  \geq\frac{1}{e}-0.08>0.287.
\]
It remains to upper-bound $\Pr_{\mathcal{D}}\left[  B\right]  $. \ Using the
Poisson approximation, for every fixed integer $k\geq0$\ we have%
\[
\Pr_{\mathcal{U}}\left[  \left\vert \operatorname{Im}_{X}\right\vert
=M-k\right]  =\frac{1}{e\cdot k!}.
\]
By comparison, for every fixed $k\geq1$\ we have%
\[
\Pr_{\mathcal{D}}\left[  \left\vert \operatorname{Im}_{X}\right\vert
=M-k\right]  =\frac{1}{e\cdot\left(  k-1\right)  !}.
\]
Now, once we condition on the value of $\left\vert \operatorname{Im}%
_{X}\right\vert $, it is not hard to see that the distributions $\mathcal{D}%
$\ and $\mathcal{U}$\ are identical. \ Thus, since%
\[
\frac{\Pr_{\mathcal{D}}\left[  \left\vert \operatorname{Im}_{X}\right\vert
=M-k\right]  }{\Pr_{\mathcal{U}}\left[  \left\vert \operatorname{Im}%
_{X}\right\vert =M-k\right]  }=\frac{e\cdot k!}{e\cdot\left(  k-1\right)  !}=k
\]
increases with $k$, the way to maximize $\Pr_{\mathcal{D}}\left[  B\right]  $
is to set $B=1$\ for those inputs $X$ such that $k$\ is as large as possible
(in other words, such that $\left\vert \operatorname{Im}_{X}\right\vert $\ is
as small as possible). \ Notice that%
\begin{align*}
\Pr_{\mathcal{U}}\left[  \left(  \left\vert \operatorname{Im}_{X}\right\vert
<M\right)  \wedge B\right]    & \leq\Pr_{\mathcal{U}}\left[  B\neq
f_{\operatorname*{Surj}}\right]  \\
& \leq0.08\\
& <1-\frac{5}{2e}\\
& =\sum_{k=3}^{\infty}\frac{1}{e\cdot k!}.
\end{align*}
It follows that%
\begin{align*}
\Pr_{\mathcal{D}}\left[  B\right]    & \leq\sum_{k=3}^{\infty}\Pr
_{\mathcal{D}}\left[  \left\vert \operatorname{Im}_{X}\right\vert =M-k\right]
\\
& =\sum_{k=3}^{\infty}\frac{1}{e\cdot\left(  k-1\right)  !}\\
& =1-\frac{2}{e}\\
& <0.265.
\end{align*}
Combining,%
\[
\Pr_{\mathcal{U}}\left[  B\right]  -\Pr_{\mathcal{D}}\left[  B\right]
>0.287-0.265=0.022.
\]

\end{proof}

Recall that Lemma \ref{indep}\ showed the distribution $\mathcal{D}$\ to be
$2k/M$-almost $k$-wise independent. \ Examining the proof of Lemma
\ref{indep}, we can actually strengthen the conclusion to the following.

\begin{lemma}
\label{dnflem}Let $F$\ be a $k$-DNF formula, with $k\leq M/2$. \ Then%
\[
1-\frac{k}{M}\leq\frac{\Pr_{\mathcal{D}}\left[  F\right]  }{\Pr_{\mathcal{U}%
}\left[  F\right]  }\leq1+\frac{2k}{M}.
\]

\end{lemma}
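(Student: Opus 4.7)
The plan is to follow the two-sided coupling argument of Lemma \ref{indep} essentially verbatim, replacing a single $k$-term $C$ by a $k$-DNF $F=C_1\vee\cdots\vee C_t$ and using one of the satisfied $C_i$'s as a witness. For the lower bound, I would couple $(X,Z)$ by drawing $X\sim\mathcal{U}$ and then $Z\sim\mathcal{D}(X)$, and observe that
$$\Pr_\mathcal{D}[F]\;\geq\;\Pr_{(X,Z)}[F(X)=1\text{ and }F(Z)=1]\;=\;\Pr_\mathcal{U}[F]-\Pr_{(X,Z)}[F(X)=1,\,F(Z)=0].$$
Whenever $F(X)=1$, pick (deterministically, say by smallest index) some $C_i$ with $C_i(X)=1$; then $F(Z)=0$ forces $C_i(Z)=0$, which in turn requires the random value $y$ chosen in step (1) of $\mathcal{D}(X)$ to equal one of the at most $k$ prescribed values $y_1,\ldots,y_{k'}$ appearing in $C_i$. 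This has probability at most $k/M$ by the union bound, exactly as in the term case, so $\Pr_\mathcal{D}[F]\geq(1-k/M)\Pr_\mathcal{U}[F]$.

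The upper bound is symmetric: use Claim \ref{invclaim} to sample $X\sim\mathcal{U}$ by first drawing $Z\sim\mathcal{D}$ and then $X\sim\mathcal{D}^{\operatorname*{inv}}(Z)$, giving
$$\Pr_\mathcal{U}[F]\;\geq\;\Pr_\mathcal{D}[F]-\Pr_{(Z,X)}[F(Z)=1,\,F(X)=0].$$
Now if $F(Z)=1$, pick a witness $C_i$ with $C_i(Z)=1$; for $F(X)=0$ we need $C_i(X)=0$, i.e.\ at least one of the at most $k$ variables $z_{i_1},\ldots,z_{i_{k'}}$ appearing in $C_i$ must be changed by the inverse perturbation. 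But step (2) of $\mathcal{D}^{\operatorname*{inv}}$ changes any individual coordinate with probability $1/M$, so the union bound again yields a failure probability of at most $k/M$, giving $\Pr_\mathcal{U}[F]\geq(1-k/M)\Pr_\mathcal{D}[F]$.

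Combining the two inequalities exactly as in the proof of Lemma \ref{indep}, and invoking the hypothesis $k\leq M/2$ (so that $1/(1-k/M)\leq 1+2k/M$), I would conclude
$$1-\frac{k}{M}\;\leq\;\frac{\Pr_\mathcal{D}[F]}{\Pr_\mathcal{U}[F]}\;\leq\;1+\frac{2k}{M}.$$
The only conceptual point beyond Lemma \ref{indep} is the observation that one never needs a union bound over the (possibly exponentially many) terms of $F$; it suffices to track a single witness term, and the witness is allowed to depend on the current input. I expect this witness-based reduction to be the only subtle step, and once it is spelled out the rest is a direct rerun of the term-level calculation.
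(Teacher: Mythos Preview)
Your proposal is correct and follows essentially the same approach as the paper's proof: both select, for each input satisfying $F$, a single witness term $C_i$ and bound the probability that the perturbation (respectively, inverse perturbation) destroys that witness, yielding $\Pr_\mathcal{D}[F]\geq(1-k/M)\Pr_\mathcal{U}[F]$ and symmetrically $\Pr_\mathcal{U}[F]\geq(1-k/M)\Pr_\mathcal{D}[F]$. Your phrasing in terms of an explicit coupling and deterministic witness selection is slightly more formal than the paper's, but the content is identical; one small remark is that the ``prescribed values $y_1,\ldots,y_{k'}$'' language presumes proper terms, whereas a $k$-DNF term is a conjunction of $k$ \emph{bit}-literals---still, those $k$ bits touch at most $k$ coordinates $x_{i_j}$, and the perturbation alters a bit only if $y$ equals the current value of its coordinate, so the union bound over at most $k$ coordinate-values goes through unchanged.
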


\begin{proof}
Let $F=C_{1}\vee\cdots\vee C_{\ell}$. \ Fix an input $X\in\left[  M\right]
^{N}$, and suppose $F\left(  X\right)  =1$. \ Then there must be an
$i\in\left[  \ell\right]  $ such that $C_{i}\left(  X\right)  =1$. \ In the
proof of Lemma \ref{indep}, we actually showed that%
\[
\Pr_{\mathcal{D}\left(  X\right)  }\left[  C_{i}\right]  \geq1-\frac{k}{M}.
\]
It follows that%
\[
\Pr_{\mathcal{D}\left(  X\right)  }\left[  F\right]  \geq1-\frac{k}{M},
\]
and hence%
\[
\Pr_{\mathcal{D}}\left[  F\right]  \geq\Pr_{\mathcal{U}}\left[  F\right]
\cdot\left(  1-\frac{k}{M}\right)  .
\]
Similarly,%
\[
\Pr_{\mathcal{U}}\left[  F\right]  \geq\Pr_{\mathcal{D}}\left[  F\right]
\cdot\left(  1-\frac{k}{M}\right)  .
\]
The lemma now follows, using the assumption $k\leq M/2$.
\end{proof}

By combining Lemma \ref{dnflem} with the standard diagonalization tricks of
Bennett and Gill \cite{bg}, we can now prove a random oracle separation
between $\mathsf{\Pi}_{2}^{p}$\ and $\mathsf{P}^{\mathsf{NP}}$.

\begin{theorem}
\label{pnpsep}$\left(  \mathsf{\Pi}_{2}^{p}\right)  ^{A}\not \subset
\mathsf{P}^{\mathsf{NP}^{A}}$\ with probability $1$\ relative to a random
oracle $A$.
\end{theorem}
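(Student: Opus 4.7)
The plan is to mimic the Bennett-Gill diagonalization, using $f_{\operatorname*{Surj}}$ in place of a raw oracle bit. Fix a very sparse sequence of input lengths $n_{1}<n_{2}<\cdots$ (with $n_{i+1}$ larger than every polynomial in $n_{i}$), and for each $n=n_{i}$ reserve a dedicated block $B_{n}$ of oracle positions that encodes an instance $X_{n}\in\lbrack M_{n}]^{N_{n}}$ of the surjectivity problem, with parameters chosen so that $\log M_{n}$ is polynomial in $n$ while $M_{n}=2^{n^{\Omega(1)}}$ is superpolynomial. Define $1^{n}\in L_{A}$ iff $n=n_{i}$ for some $i$ and $f_{\operatorname*{Surj}}(X_{n})=1$. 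The containment $L_{A}\in(\mathsf{\Pi}_{2}^{p})^{A}$ is immediate from the $\forall y\in\lbrack M_{n}]\,\exists i\in\lbrack N_{n}]$ structure of $f_{\operatorname*{Surj}}$, since both quantifiers range over $\mathrm{poly}(n)$-bit strings and the inner predicate just compares $x_{i}$ to $y$ after reading $\log M_{n}$ oracle bits. The task is therefore to show that no fixed $\mathsf{P}^{\mathsf{NP}^{\cdot}}$ machine computes $L_{A}$ correctly with positive probability over a uniformly random $A$.

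Fix such a machine $M$ of time complexity $p(n)$. The heart of the argument is to show that for each large $n=n_{i}$, conditioned on every oracle bit outside $B_{n}$, the function $X_{n}\mapsto M^{A}(1^{n})$ distinguishes the uniform distribution $\mathcal{U}$ from the distribution $\mathcal{D}$ of Section~\ref{COUNTER} by only $o(1)$. An NP query relative to $A$ has the form $\exists w\in\{0,1\}^{p(n)}:\phi(w,A)$ with $\phi$ polynomial-time; for each fixed $w$, $\phi(w,A)$ depends on at most $p(n)$ oracle bits, so the query's truth value is a $p(n)$-width DNF in $A$ (with up to $2^{p(n)}$ terms). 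After freezing all bits outside $B_{n}$, each query is therefore a $p(n)$-DNF in $X_{n}$, which puts us squarely in the regime of Lemma~\ref{dnflem}.

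The quantitative engine is slightly sharper than the \emph{statement} of Lemma~\ref{dnflem}, but its proof already supplies it: under the natural coupling $X\sim\mathcal{U}$, $Z\sim\mathcal{D}(X)$, each individual bit of $X$ changes with probability at most $1/M$, so any Boolean $k$-term true on $X$ survives on $Z$ with probability $\geq1-k/M$; Claim~\ref{invclaim} yields the analogous bound in the reverse direction. Hence for any $p(n)$-DNF $F$, $\Pr_{(X,Z)}[F(X)\neq F(Z)]\leq 2p(n)/M$. Since the identity of the $i$-th adaptive NP query is determined by the answers to queries $1,\dots,i-1$, a standard hybrid shows that conditioned on the first $i-1$ answers coinciding on $X$ and $Z$, the $i$-th query is literally the same $p(n)$-DNF in both runs and its answer agrees with probability $\geq 1-2p(n)/M$. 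Union-bounding over the $\leq p(n)$ queries, $M$'s whole output agrees on $X$ and $Z$ with probability $1-O(p(n)^{2}/M)=1-o(1)$, whence $\lvert\Pr_{\mathcal{U}}[M^{A}(1^{n})=1]-\Pr_{\mathcal{D}}[M^{A}(1^{n})=1]\rvert=o(1)$. Lemma~\ref{dvsu} then forces $\Pr_{\mathcal{U}}[M^{A}(1^{n})\neq f_{\operatorname*{Surj}}(X_{n})]\geq 0.08-o(1)$ for all sufficiently large $n_{i}$, and averaging over the bits of $A$ outside $B_{n}$ preserves this.

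Finally, choose the sequence $\{n_{i}\}$ sparse enough that $n_{i+1}$ exceeds the maximum oracle-address length $M$ can produce in time $p(n_{i})$; then on input $1^{n_{i}}$, the machine cannot touch $B_{n_{j}}$ for any $j>i$. Letting $\mathcal{F}_{i}$ denote the $\sigma$-algebra generated by all oracle bits outside $\bigcup_{j>i}B_{n_{j}}$, the error event $E_{i}:=\{M^{A}(1^{n_{i}})\neq L_{A}(1^{n_{i}})\}$ is $\mathcal{F}_{i}$-measurable and satisfies $\Pr[E_{i}\mid\mathcal{F}_{i-1}]\geq 0.07$ for all large $i$; the conditional Borel-Cantelli lemma then yields $E_{i}$ infinitely often almost surely, and a union bound over the countably many polynomial-time oracle machines completes the separation. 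The main point to verify is the adaptive hybrid: one must check that earlier query answers coinciding on $X$ and $Z$ force the subsequent queries to be the \emph{same} $p(n)$-DNF in both runs, which is precisely what prevents the DNF widths from blowing up with depth and keeps the union bound polynomial.
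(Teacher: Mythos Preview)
Your proof is correct and follows essentially the same approach as the paper's: encode $f_{\operatorname{Surj}}$ instances in the oracle at sparse input lengths, use Lemma~\ref{dnflem} (via the coupling underlying its proof) to show that each NP query distinguishes $\mathcal{U}$ from $\mathcal{D}$ by at most $O(p(n)/M)$, union-bound over the $\leq p(n)$ adaptive queries, and then invoke Lemma~\ref{dvsu} together with the standard Bennett--Gill diagonalization. Your explicit coupling-based hybrid for handling adaptivity is in fact somewhat more careful than the paper's one-line ``by the union bound \ldots\ even after we take into account the possible adaptivity of the queries.''
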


\begin{proof}
We will treat the random oracle $A$ as encoding, for each positive integer
$m$, a random sequence of integers $X_{m}\in\left[  M\right]  ^{N}$, where
$M:=2^{m}$\ and $N:=\left\lceil Mm\ln2\right\rceil $. \ Let
$f_{\operatorname*{Surj}}:\left[  M\right]  ^{N}\rightarrow\left\{
0,1\right\}  $ be our usual surjectivity function; i.e.
$f_{\operatorname*{Surj}}\left(  X_{m}\right)  =1$\ if and only if
$\operatorname{Im}_{X_{m}}=\left[  M\right]  $. \ Then let\ $L$\ be a unary
language that contains $0^{m}$\ if and only if $f_{\operatorname*{Surj}%
}\left(  X_{m}\right)  =1$. \ Clearly $L\in\left(  \mathsf{\Pi}_{2}%
^{p}\right)  ^{A}$. \ It remains to show that $L\notin\mathsf{P}%
^{\mathsf{NP}^{A}}$ with probability $1$ over $A$. \ Fix a $\mathsf{P}%
^{\mathsf{NP}^{A}}$\ machine $B^{A}$, which runs in time $p\left(  m\right)
$\ for some fixed polynomial $p$. \ Also, let $m_{1},m_{2},\ldots$\ be a
sequence of input lengths that are exponentially far apart, so that we do not
need to worry about $B^{A}\left(  0^{m_{i}}\right)  $\ querying $X_{m_{j}}%
$\ for any $j>i$. \ We will treat $X_{m_{j}}$\ as fixed for all $j<i$,\ so
that only $X:=X_{m}:=X_{m_{i}}$\ itself is a random variable. \ Then
$B^{A}\left(  0^{m}\right)  $\ makes a sequence of at most $p\left(  m\right)
$ adaptive $\mathsf{NTIME}\left(  p\left(  m\right)  \right)  $\ queries to
$X$, call them $Q_{1},\ldots,Q_{p\left(  m\right)  }$. \ For each $t\in\left[
p\left(  m\right)  \right]  $, we can write a $p\left(  m\right)  $-DNF
formula\ $F_{t}\left(  X\right)  $\ which evaluates to TRUE if and only if
$Q_{t}\left(  X\right)  $\ accepts. \ Then by Lemma \ref{dnflem}, we have%
\[
1-\frac{p\left(  m\right)  }{M}\leq\frac{\Pr_{\mathcal{D}}\left[
F_{t}\right]  }{\Pr_{\mathcal{U}}\left[  F_{t}\right]  }\leq1+\frac{2p\left(
m\right)  }{M}.
\]
This implies that%
\[
\left\vert \Pr_{\mathcal{D}}\left[  F_{t}\right]  -\Pr_{\mathcal{U}}\left[
F_{t}\right]  \right\vert \leq\frac{2p\left(  m\right)  }{M}.
\]
So by the union bound, we have%
\[
\left\vert \Pr_{\mathcal{D}}\left[  B^{A}\left(  0^{m}\right)  \right]
-\Pr_{\mathcal{U}}\left[  B^{A}\left(  0^{m}\right)  \right]  \right\vert
\leq\frac{2p\left(  m\right)  ^{2}}{M},
\]
even after we take into account the possible adaptivity of the queries.
\ Clearly $2p\left(  m\right)  ^{2}/M<0.022$\ for all sufficiently large $m$.
\ So taking the contrapositive of Lemma \ref{dvsu},%
\[
\Pr_{A}\left[  B^{A}\left(  0^{m}\right)  =f\left(  X\right)  \right]  <0.92
\]
for all sufficiently large $M$. \ So as in the standard random oracle argument
of Bennett and Gill \cite{bg}, we have%
\[
\Pr_{A}\left[  B^{A}\text{ decides }L\right]  \leq\prod_{i=1}^{\infty}\Pr
_{A}\left[  B^{A}\left(  0^{m_{i}}\right)  =f\left(  X_{m_{i}}\right)
\right]  =0.
\]
Then taking the union bound over all $\mathsf{P}^{\mathsf{NP}^{A}}$\ machines
$B^{A}$,%
\[
\Pr_{A}\left[  L\in\mathsf{P}^{\mathsf{NP}^{A}}\right]  =0
\]
as well.
\end{proof}

It is well-known that $\mathsf{P}^{\mathsf{NP}^{A}}=\mathsf{BPP}%
^{\mathsf{NP}^{A}}$\ with probability $1$ relative to a random oracle $A$.
\ Thus, Theorem \ref{pnpsep} immediately implies that $\left(  \mathsf{\Pi
}_{2}^{p}\right)  ^{A}\not \subset \mathsf{BPP}^{\mathsf{NP}^{A}}$\ relative
to a random oracle $A$ as well. \ Since the class $\mathsf{BPP}_{\mathsf{path}%
}$\ is contained in $\mathsf{BPP}^{\mathsf{NP}}$\ (as shown by Han,
Hemaspaandra, and Thierauf \cite{hht}), we also obtain the new result that
$\left(  \mathsf{\Pi}_{2}^{p}\right)  ^{A}\not \subset \mathsf{BPP}%
_{\mathsf{path}}^{A}$\ relative to a random oracle $A$.

\section{Implications for $\mathsf{AC}^{\mathsf{0}}$\label{POLY}}

In this section, we discuss two implications of our counterexample for
$\mathsf{AC}^{\mathsf{0}}$ functions.

\begin{enumerate}
\item[(1)] Linial, Mansour, and Nisan \cite{lmn}\ famously showed that every
$\mathsf{AC}^{\mathsf{0}}$\ function has average sensitivity $O\left(
\operatorname*{polylog}n\right)  $. \ By contrast, we show in Section
\ref{BS}\ that there are reasonably-balanced $\mathsf{AC}^{\mathsf{0}}%
$\ functions with average\textit{ block-sensitivity} almost linear in $n$ (on
both $0$-inputs and $1$-inputs). \ In other words, there exist $\mathsf{AC}%
^{\mathsf{0}}$\ functions that counterintuitively behave almost like the
\textsc{Parity} function in terms of block-sensitivity!

\item[(2)] Linial et al.\ \cite{lmn} also showed that every $\mathsf{AC}%
^{\mathsf{0}}$\ function can be approximated in $L_{2}$-norm by a low-degree
polynomial. \ By contrast, we show in Section \ref{LOWFAT}\ that there does
not generally exist such a polynomial that \textit{also} satisfies a
reasonable sparseness condition on the coefficients (what Aaronson
\cite{aar:ph}\ called the \textquotedblleft low-fat\textquotedblright\ condition).
\end{enumerate}

\subsection{\label{BS}The Average Block-Sensitivity of $\mathsf{AC}%
^{\mathsf{0}}$}

Let us first recall the definition of average sensitivity.

\begin{definition}
[average sensitivity]\label{sensitivity}Given a string $X\in\left\{
0,1\right\}  ^{n}$\ and coordinate $i\in\left[  n\right]  $, let $X^{i}%
$\ denote $X$\ with the $i^{th}$\ bit flipped. \ Then given a Boolean function
$f:\left\{  0,1\right\}  ^{n}\rightarrow\left\{  0,1\right\}  $, the
sensitivity of $f$\ at $X$, or $\operatorname*{s}_{X}\left(  f\right)  $, is
the number of $i$'s such that $f\left(  X^{i}\right)  \neq f\left(  X\right)
$. \ Then the average sensitivity of $f$ is%
\[
\overline{\operatorname*{s}}\left(  f\right)  :=\operatorname*{E}%
_{X\in\left\{  0,1\right\}  ^{n}}\left[  \operatorname*{s}\nolimits_{X}\left(
f\right)  \right]  .
\]
Assuming $f$ is non-constant, we can also define the average $0$-sensitivity
$\overline{\operatorname*{s}}_{0}\left(  f\right)  $\ and average
$1$-sensitivity $\overline{\operatorname*{s}}_{1}\left(  f\right)
$\ respectively, by%
\[
\overline{\operatorname*{s}}_{b}\left(  f\right)  :=\operatorname*{E}%
_{X\in\left\{  0,1\right\}  ^{n}~:~f\left(  X\right)  =b}\left[
\operatorname*{s}\nolimits_{X}\left(  f\right)  \right]  .
\]

\end{definition}

Then Linial, Mansour, and Nisan \cite{lmn}\ showed that \textit{every
}$\mathsf{AC}^{\mathsf{0}}$\textit{\ function has low average sensitivity:}

\begin{theorem}
[\cite{lmn}]\label{influence}Every Boolean function $f:\left\{  0,1\right\}
^{n}\rightarrow\left\{  0,1\right\}  $\ computed by an $\mathsf{AC}^{0}%
$\ circuit of depth $d$ satisfies $\overline{\operatorname*{s}}\left(
f\right)  =O\left(  \log^{d}n\right)  $.
\end{theorem}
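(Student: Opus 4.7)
The plan is to reduce the average-sensitivity bound to Fourier analysis of $f$, and then to prove the required Fourier concentration via H\aa stad's switching lemma applied to random restrictions of the circuit. The starting point is the classical identity
\[
\overline{\operatorname*{s}}(f) \;=\; \sum_{S \subseteq [n]} |S| \cdot \hat{f}(S)^2,
\]
valid when $f$ is viewed as a $\pm 1$-valued function with Fourier coefficients in the character basis $\chi_S(x) = \prod_{i \in S}(-1)^{x_i}$. Truncating the sum at a threshold $t$ and using Parseval ($\sum_S \hat{f}(S)^2 = 1$) yields
\[
\overline{\operatorname*{s}}(f) \;\leq\; t \;+\; n \cdot W^{>t}(f), \qquad W^{>t}(f) := \sum_{|S|>t} \hat{f}(S)^2,
\]
so it suffices to establish Fourier concentration: $W^{>t}(f) \leq 1/n$ for some $t = O(\log^d n)$.

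The heart of the argument is the bound $W^{>t}(f) \leq s \cdot 2^{-\Omega(t^{1/d})}$ for any $f$ computed by a size-$s$, depth-$d$ circuit. I would prove this by applying a sequence of independent random restrictions $\rho_{p_1}, \ldots, \rho_{p_{d-1}}$, each keeping every variable alive with some probability $p_i$ and otherwise setting it uniformly to $\pm 1$. H\aa stad's switching lemma guarantees that each restriction, applied to any depth-$2$ subcircuit with bottom fan-in $w$, converts it to a function expressible by a decision tree of depth at most $\ell$ except with probability at most $(5 p_i w)^{\ell}$. Choosing $p_i$ of order $1/\log n$ and $\ell$ of order $\log n$, and taking a union bound over the polynomially many depth-$2$ subcircuits at each level, after $d-1$ iterations the entire circuit collapses (with high probability over the composed restriction) into a decision tree of depth $O(\log n)$ in the surviving variables. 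A depth-$\ell$ decision tree has Fourier support confined to sets of size at most $\ell$, and a standard projection argument --- comparing the expected Fourier weight of $f|_{\rho}$ above degree $\ell$ to the Fourier tail of $f$ --- then transports this concentration back to $f$ itself.

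Combining the two ingredients, I set $t = c (\log n)^d$ for a sufficiently large constant $c$ depending on $d$; the Fourier tail bound gives $n \cdot W^{>t}(f) = o(1)$, and the sensitivity identity then yields $\overline{\operatorname*{s}}(f) = O(\log^d n)$. The main technical obstacle is H\aa stad's switching lemma itself, whose proof requires a delicate encoding/canonical-decision-tree argument to bound the failure probability of DNF simplification under a random restriction; granting this, the remaining ingredients --- iterating the switching lemma across depth levels, converting decision-tree depth into Fourier degree, and propagating the tail bound from $f|_{\rho}$ back to $f$ via the restriction-to-Fourier projection identity --- are, by now, essentially bookkeeping.
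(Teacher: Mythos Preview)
The paper does not supply its own proof of this theorem: it is quoted verbatim as a result of Linial, Mansour, and Nisan \cite{lmn}, with no proof environment following the statement. Your proposal is the standard LMN argument (Fourier identity $\overline{\operatorname*{s}}(f)=\sum_{S}|S|\hat f(S)^{2}$, truncation at level $t$, and Fourier tail bound via iterated random restrictions and H\aa stad's switching lemma), and it is correct as stated; there is simply nothing in the present paper to compare it against.
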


We now recall the definition of \textit{block-sensitivity}, a natural
generalization of sensitivity introduced by Nisan \cite{nisan}.

\begin{definition}
[average block-sensitivity]\label{bsdef}Given a string $X\in\left\{
0,1\right\}  ^{n}$\ and a subset of indices $B\subseteq\left[  n\right]  $
(called a \textquotedblleft block\textquotedblright), let $X^{B}$\ denote
$X$\ with the bits in $B$ flipped. \ Then given a Boolean function $f:\left\{
0,1\right\}  ^{n}\rightarrow\left\{  0,1\right\}  $, the block-sensitivity of
$f$\ at $X$, or $\operatorname*{bs}\nolimits_{X}\left(  f\right)  $, is the
largest $k$ for which there exist $k$\ pairwise-disjoint blocks, $B_{1}%
,\ldots,B_{k}$, such that $f\left(  X^{B_{i}}\right)  \neq f\left(  X\right)
$ for all $i\in\left[  k\right]  $. \ Then the average block-sensitivity of
$f$ is%
\[
\overline{\operatorname*{bs}}\left(  f\right)  :=\operatorname*{E}%
_{X\in\left\{  0,1\right\}  ^{n}}\left[  \operatorname*{bs}\nolimits_{X}%
\left(  f\right)  \right]  .
\]
Assuming $f$ is non-constant, we can also define the average $0$%
-block-sensitivity $\overline{\operatorname*{bs}}_{0}\left(  f\right)  $\ and
average $1$-block-sensitivity $\overline{\operatorname*{bs}}_{1}\left(
f\right)  $\ respectively, by%
\[
\overline{\operatorname*{bs}}_{b}\left(  f\right)  :=\operatorname*{E}%
_{X\in\left\{  0,1\right\}  ^{n}~:~f\left(  X\right)  =b}\left[
\operatorname*{bs}\nolimits_{X}\left(  f\right)  \right]  .
\]

\end{definition}

We consider the following question: \textit{does any analogue of Theorem
\ref{influence} still hold if we replace sensitivity by block-sensitivity?}

We start with some simple observations. \ Call a Boolean function $f:\left\{
0,1\right\}  ^{n}\rightarrow\left\{  0,1\right\}  $
\textit{reasonably-balanced} if there exist constants $a,b\in\left(
0,1\right)  $\ such that $a\leq\operatorname*{E}_{\left\{  0,1\right\}  ^{n}%
}\left[  f\right]  \leq b$\ for every $n$. \ Then if we do not require $f$ to
be reasonably-balanced,\ it is easy to find an $f\in\mathsf{AC}^{0}$ such that
$\overline{\operatorname*{bs}}_{0}\left(  f\right)  $ and $\overline
{\operatorname*{bs}}_{1}\left(  f\right)  $\ are both large. \ For example,
the two-level AND-OR tree satisfies $\overline{\operatorname*{bs}}_{0}\left(
f\right)  =\Theta\left(  \sqrt{n}\right)  $\ and $\overline{\operatorname*{bs}%
}_{1}\left(  f\right)  =\Theta\left(  \sqrt{n}\right)  $.

So let us require $f$ to be reasonably-balanced. \ Even then, it is easy to
find an $f\in\mathsf{AC}^{0}$\ such that $\overline{\operatorname*{bs}}\left(
f\right)  =\Omega\left(  n/\log n\right)  $. \ Given an input $X=x_{1}\ldots
x_{N}\in\left[  N\right]  ^{N}$, define the \textit{Tribes function} by
$f_{\operatorname*{Tribes}}\left(  X\right)  =1$\ if there exists an
$i\in\left[  N\right]  $\ such that $x_{i}=1$, and $f_{\operatorname*{Tribes}%
}\left(  X\right)  =0$\ otherwise. \ Then not only is
$f_{\operatorname*{Tribes}}$\ in $\mathsf{AC}^{0}$, it has an $\mathsf{AC}%
^{0}$\ circuit of depth $2$ (i.e., a DNF formula). \ On the other hand, let
$X$\ be any $0$-input of $f_{\operatorname*{Tribes}}$;\ then we can change
$X$\ to a $1$-input by setting $x_{i}:=1$\ for any $i$. \ So%
\[
\operatorname*{bs}\nolimits_{X}\left(  f_{\operatorname*{Tribes}}\right)  \geq
N=\Omega\left(  \frac{n}{\log n}\right)  ,
\]
where $n:=N\log_{2}N$\ is the bit-length of $X$. \ Hence $\overline
{\operatorname*{bs}}_{0}\left(  f_{\operatorname*{Tribes}}\right)
=\Omega\left(  n/\log n\right)  $.\ \ Indeed $\overline{\operatorname*{bs}%
}\left(  f_{\operatorname*{Tribes}}\right)  =\Omega\left(  n/\log n\right)
$\ as well, since%
\[
\lim_{N\rightarrow\infty}\Pr_{X}\left[  f_{\operatorname*{Tribes}}\left(
X\right)  =0\right]  =\frac{1}{e}.
\]
By contrast, one can check that $\overline{\operatorname*{bs}}_{1}\left(
f_{\operatorname*{Tribes}}\right)  $ is only $\Theta\left(  \log n\right)  $.
\ Indeed, \textit{any} Boolean function $f$\ that can be represented by a
$k$-DNF formula satisfies $\overline{\operatorname*{bs}}_{1}\left(  f\right)
\leq k$, since if a particular $k$-term $C$ is satisfied, then there are at
most $k$ disjoint ways to make it unsatisfied.

The above observations might lead one to ask the following question:
\textit{does every reasonably-balanced }$\mathsf{AC}^{0}$\textit{\ function
}$f$\textit{\ satisfy either }$\overline{\operatorname*{bs}}_{0}\left(
f\right)  =O\left(  \operatorname*{polylog}n\right)  $\textit{\ or\ }%
$\overline{\operatorname*{bs}}_{1}\left(  f\right)  =O\left(
\operatorname*{polylog}n\right)  $\textit{? \ }We now show, alas, that the
answer is still no.

\begin{theorem}
\label{bsthm}There exists a reasonably-balanced Boolean function $f:\left\{
0,1\right\}  ^{n}\rightarrow\left\{  0,1\right\}  $, computable by a
depth-three $\mathsf{AC}^{0}$\ circuit, such that $\overline
{\operatorname*{bs}}_{0}\left(  f\right)  =\Omega\left(  n/\log n\right)  $
and $\overline{\operatorname*{bs}}_{1}\left(  f\right)  =\Omega\left(  n/\log
n\right)  $.
\end{theorem}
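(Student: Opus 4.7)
\textbf{Proof proposal for Theorem \ref{bsthm}.} The plan is to show that the same function $f_{\operatorname*{Surj}}$ from Section \ref{COUNTER} works. By Lemma \ref{inac0}, $f_{\operatorname*{Surj}}$ is computable by a depth-three $\mathsf{AC}^0$ circuit, and by Lemma \ref{dist} we have $\Pr_{\mathcal{U}}[f_{\operatorname*{Surj}} = 1] \to 1/e$, so it is reasonably-balanced. Since $n = Nm$ with $N = \Theta(Mm)$, we have $n/\log n = \Theta(Mm) = \Theta(N)$, so the two block-sensitivity bounds reduce to showing $\operatorname*{bs}\nolimits_X(f_{\operatorname*{Surj}}) = \Omega(Mm)$ on typical $1$-inputs and $\Omega(N)$ on typical $0$-inputs.

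For $1$-inputs the argument is clean and I would give an explicit family of $Mm$ pairwise-disjoint blocks. For each $y \in [M]$ and each bit position $t \in [m]$, let $B_{y,t}$ consist of the $t^{\text{th}}$ bit of $x_i$ for every $i$ with $x_i = y$. Since $X$ is surjective, each $B_{y,t}$ is non-empty; flipping it changes every such $x_i$ from $y$ to $y \oplus 2^{t-1} \neq y$, so $y$ disappears from the image and $f_{\operatorname*{Surj}}(X^{B_{y,t}}) = 0$. Blocks with different $y$ involve disjoint positions, and blocks sharing a $y$ but differing in $t$ involve different bits of the same positions, so the whole family is pairwise-disjoint. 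This gives $\operatorname*{bs}\nolimits_X(f_{\operatorname*{Surj}}) \geq Mm = \Omega(n/\log n)$ for every surjective $X$.

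For $0$-inputs, I would exploit the fact (implicit in Lemma \ref{dvsu}) that a random $0$-input has only $O(1)$ missing values in expectation. Call $i \in [N]$ a \emph{singleton position} if $x_i$ is the unique occurrence of its value in $X$, let $s$ be the number of singleton positions, and let $k \geq 1$ be the number of missing values $y_1, \ldots, y_k$. For any $k$ non-singleton positions $i_1, \ldots, i_k$, the block $B$ that sets $x_{i_j} := y_j$ for all $j$ yields a surjective input, since each missing $y_j$ is now present and every value removed was non-singleton, hence still appears elsewhere. Partitioning the $N - s$ non-singleton positions into groups of size $k$ gives $\lfloor (N-s)/k \rfloor$ such pairwise-disjoint blocks, so $\operatorname*{bs}\nolimits_X(f_{\operatorname*{Surj}}) \geq (N - s)/k$.

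The main (though not severe) obstacle is turning this pointwise bound into an average. The Poisson approximation used in Lemma \ref{dvsu} gives $\Pr[k = j \mid f_{\operatorname*{Surj}}(X) = 0] = 1/((e-1) j!)$, so $k$ is a bounded constant with constant probability; and a routine computation shows $\operatorname*{E}[s] = \Theta(\log M) = o(N)$, so by Markov $s = o(N)$ with probability $1-o(1)$. Conditioning on both typical events, $(N-s)/k = \Omega(N)$, and the contribution of this typical case to $\overline{\operatorname*{bs}}_0(f_{\operatorname*{Surj}})$ already gives $\Omega(N) = \Omega(n/\log n)$, completing the argument.
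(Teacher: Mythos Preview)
Your proposal is correct and follows essentially the same approach as the paper: the same function $f_{\operatorname*{Surj}}$, the same $Mm$ disjoint blocks $B_{y,t}$ for $1$-inputs, and the same ``fill in the missing values at non-singleton positions'' idea for $0$-inputs.

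The one difference worth noting is in the $0$-input analysis. You treat the general case of $k\geq 1$ missing values, bound the number of singletons $s$ in expectation, and then invoke Markov to get $s=o(N)$ with high probability. The paper instead simply conditions on the single event $\left\vert\operatorname{Im}_X\right\vert=M-1$, which already has limiting probability $1/e$; on that event each sensitive block is a \emph{single} coordinate, and the number of non-singleton positions is bounded \emph{deterministically} by pigeonhole as $\left\vert A(X)\right\vert\geq N-(M-1)$, since at most $M-1$ values are used and each can be a singleton at most once. This yields $\overline{\operatorname*{bs}}_0\left(f_{\operatorname*{Surj}}\right)\geq\frac{1/e}{1-1/e}(N-M)$ directly, with no probabilistic tail bound on $s$ needed. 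Your route is a bit more general but introduces an avoidable Markov step; the paper's is shorter.
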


\begin{proof}
Let $f$ be the function $f_{\operatorname*{Surj}}$ from our counterexample.
\ As usual, we can think of an input $X$\ to $f_{\operatorname*{Surj}}$ as
belonging to either or $\left\{  0,1\right\}  ^{n}$ or $\left[  M\right]
^{N}$, where $M=2^{m}$, $N=\left\lceil Mm\ln2\right\rceil $, and $n=Nm$. \ As
in Lemma \ref{dist}, we have%
\[
\lim_{M\rightarrow\infty}\operatorname*{E}_{\left[  M\right]  ^{N}}\left[
f_{\operatorname*{Surj}}\right]  =\frac{1}{e},
\]
so $f_{\operatorname*{Surj}}$\ is reasonably-balanced.

To lower-bound $\overline{\operatorname*{bs}}_{1}\left(
f_{\operatorname*{Surj}}\right)  $, consider an input $X=x_{1}\ldots x_{N}%
\in\left[  M\right]  ^{N}$\ such that $f_{\operatorname*{Surj}}\left(
X\right)  =1$ or equivalently $\operatorname{Im}_{X}=\left[  M\right]  $.
\ Given $y\in\left[  M\right]  $,\ let $C_{y}\left(  X\right)  $\ be the set
of all $i\in\left[  N\right]  $\ such that $x_{i}=y$. \ Then we can change
$f_{\operatorname*{Surj}}\left(  X\right)  $\ from $1$ to $0$, by changing
$x_{i}$\ to an arbitrary element of $\left[  M\right]  \setminus\left\{
y\right\}  $ for each $i\in C_{y}\left(  X\right)  $. \ This implies that
$\operatorname*{bs}\nolimits_{X}\left(  f_{\operatorname*{Surj}}\right)  \geq
M$. \ Indeed, we can improve the bound to $\operatorname*{bs}\nolimits_{X}%
\left(  f_{\operatorname*{Surj}}\right)  \geq Mm$, by noticing that it
suffices to change a single \textit{bit} of $x_{i}$\ for each $i\in
C_{y}\left(  X\right)  $. \ Hence%
\[
\overline{\operatorname*{bs}}_{1}\left(  f_{\operatorname*{Surj}}\right)  \geq
Mm=\Omega\left(  \frac{n}{\log n}\right)  .
\]
Next consider an input $X=x_{1}\ldots x_{N}\in\left[  M\right]  ^{N}$\ such
that $\left\vert \operatorname{Im}_{X}\right\vert =M-1$. \ Then clearly
$f_{\operatorname*{Surj}}\left(  X\right)  =0$. Let $A\left(  X\right)  $\ be
the set of indices $i\in\left[  N\right]  $\ for which there exists at least
one $j\neq i$\ such that $x_{i}=x_{j}$. \ Then we have $\left\vert A\left(
X\right)  \right\vert \geq N-M$ by the pigeonhole principle. \ Also, for any
$i\in A\left(  X\right)  $, let $X^{i}$\ be identical to $X$, except that we
change $x_{i}$\ to the unique element of $\left[  M\right]  \setminus
\operatorname{Im}_{X}$. \ Then clearly $\operatorname{Im}_{X^{i}}=\left[
M\right]  $\ and $f_{\operatorname*{Surj}}\left(  X^{i}\right)  =1$.
\ Therefore $\operatorname*{bs}\nolimits_{X}\left(  f_{\operatorname*{Surj}%
}\right)  \geq\left\vert A\left(  X\right)  \right\vert \geq N-M$.
\ Furthermore, as in Lemma \ref{dvsu}, we have%
\[
\lim_{M\rightarrow\infty}\Pr_{\left[  M\right]  ^{N}}\left[  \left\vert
\operatorname{Im}_{X}\right\vert =M-1\right]  =\frac{1}{e}%
\]
by the Poisson approximation. \ It follows that%
\[
\lim_{M\rightarrow\infty}\overline{\operatorname*{bs}}_{0}\left(
f_{\operatorname*{Surj}}\right)  \geq\frac{1/e}{1-1/e}\left(  N-M\right)
=\Omega\left(  \frac{n}{\log n}\right)  .
\]

\end{proof}

\subsection{The Inapproximability of $\mathsf{AC}^{\mathsf{0}}$\ by Low-Fat
Polynomials\label{LOWFAT}}

Let us recall another basic result of Linial, Mansour, and Nisan \cite{lmn}.

\begin{theorem}
[\cite{lmn}]\label{lmnthm}Let $f:\left\{  0,1\right\}  ^{n}\rightarrow\left\{
0,1\right\}  $\ be computed by an $\mathsf{AC}^{0}$ circuit of depth
\thinspace$d$. \ Then for all $\varepsilon>0$, there exists a multilinear
polynomial $p:\left\{  0,1\right\}  ^{n}\rightarrow\mathbb{R}$\ of degree
$O\left(  \log^{d}\left(  n/\varepsilon\right)  \right)  $\ such that
$\operatorname*{E}_{\mathcal{U}}\left[  \left(  p-f\right)  ^{2}\right]
\leq\varepsilon$.
\end{theorem}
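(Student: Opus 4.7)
The plan is to prove the stronger statement that $f$'s Fourier mass is concentrated on the low-degree coefficients, and then simply take the low-degree Fourier truncation of $f$ as the approximating polynomial $p$. Concretely, identifying $f:\{0,1\}^n\to\{0,1\}$ with its character-basis Fourier expansion, Parseval's identity gives $\operatorname*{E}_{\mathcal{U}}[(p-f)^2]=\sum_{|S|>t}\widehat{f}(S)^2$ whenever $p:=\sum_{|S|\leq t}\widehat{f}(S)\chi_S$. So the whole task reduces to establishing the Fourier tail bound
\[
\sum_{|S|>t}\widehat{f}(S)^2\leq\varepsilon\qquad\text{for }t=O\!\left(\log^d(n/\varepsilon)\right).
\]

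The technical engine is H\aa stad's switching lemma: for a width-$w$ DNF $F$ and a $q$-random restriction $\rho$ (each variable stays free with probability $q$, otherwise is set to a uniform bit), $\Pr_\rho[F|_\rho \text{ requires decision-tree depth}\geq s]\leq(5qw)^s$. I would apply this iteratively to the given depth-$d$, size-$S=n^{O(1)}$ circuit with a carefully chosen sequence of restriction parameters $q_1,\ldots,q_{d-1}$, so that after each round the bottom-level subcircuits collapse (with high probability) to decision trees shallow enough to be merged into the next layer up and rewritten as the opposite gate type. After $d-1$ rounds the circuit becomes a decision tree of depth at most $t$ on the surviving variables, with total failure probability at most $S\cdot 2^{-t^{1/d}/c}$ for an absolute constant $c$. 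To convert this into a Fourier tail bound, one averages over the restriction: a depth-$t$ decision tree has no Fourier mass on sets of size $>t$, and $\widehat{f}(S)$ restricted to the free variables in $S$ under $\rho$ can be bounded by the Fourier coefficient of $f|_\rho$, so a union bound over the at most $S\cdot 2^{-t^{1/d}/c}$ bad restrictions yields $\sum_{|S|>t}\widehat{f}(S)^2\leq 2S\cdot 2^{-t^{1/d}/c}$. Setting $t=\Theta(\log^d(S/\varepsilon))=\Theta(\log^d(n/\varepsilon))$ makes this at most $\varepsilon$, completing the proof.

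The main obstacle is the parameter-tracking in the iterated switching lemma: each $q_i$ must be small enough that the switching succeeds for the current bottom fanin $w_i$, yet the product $\prod q_i$ must not decay too rapidly, since one eventually needs the tail bound for the \emph{original} function rather than some heavily restricted descendant. This delicate balancing is the same one underlying H\aa stad's $\mathsf{AC}^0$ lower bound for \textsc{Parity}, and it is what dictates the exponent $d$ in the final $\log^d(n/\varepsilon)$ bound. Once this balance is correctly set up, the remaining pieces --- Parseval, the decision-tree Fourier support lemma, and the union bound --- are routine.
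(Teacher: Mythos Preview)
The paper does not prove Theorem~\ref{lmnthm}; it is quoted from Linial, Mansour, and Nisan \cite{lmn} as background and used as a black box, so there is no ``paper's own proof'' to compare against. Your proposal is essentially the original LMN argument (H\aa stad's switching lemma applied iteratively to collapse the circuit to a shallow decision tree under a random restriction, then the restriction-to-Fourier-tail lemma, then Parseval on the low-degree truncation), and the overall strategy is correct.

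One step is stated loosely: the sentence ``$\widehat{f}(S)$ restricted to the free variables in $S$ under $\rho$ can be bounded by the Fourier coefficient of $f|_\rho$'' is not the right direction or the right identity. What LMN actually prove is that for a $q$-random restriction $\rho$ and any $t$,
\[
\sum_{|S|>t}\widehat{f}(S)^{2}\;\le\;2\,\Pr_{\rho}\bigl[f|_{\rho}\text{ has decision-tree depth}>t'\bigr]
\]
for an appropriate $t'$ related to $t$ through $q$; the key identity behind this is $\operatorname*{E}_{\rho}\bigl[\widehat{f|_{\rho}}(S)^{2}\bigr]=\sum_{T\supseteq S}\widehat{f}(T)^{2}\,q^{|S|}(1-q)^{|T\setminus S|}$ when $S$ lies in the free coordinates. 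This is the piece you are calling ``routine,'' and it is indeed standard, but your one-line summary of it is inaccurate as written. With that correction, the sketch is sound and matches the source.
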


In this section, we ask whether one can extend Theorem \ref{lmnthm}\ to get an
approximating polynomial $p$\ that is not merely low-degree, but also
representable using coefficients that are bounded in absolute value. \ The
specific property that we want was called the \textquotedblleft
low-fat\textquotedblright\ property by Aaronson \cite{aar:ph}:

\begin{definition}
[low-fat polynomials]\label{lowfatdef}Given a multilinear polynomial
$p:\left\{  0,1\right\}  ^{n}\rightarrow\mathbb{R}$, define the fat content of
$p$, or $\operatorname*{fat}\left(  p\right)  $, to be the minimum of
$\sum_{C}\left\vert \alpha_{C}\right\vert 2^{-\left\vert C\right\vert }$\ over
all representations $p=\sum_{C}\alpha_{C}C$ of $p$ as a linear combination of
terms (that is, products of $x_{i}$'s and $\left(  1-x_{i}\right)  $'s).
\ Then we call $p$ low-fat if $\operatorname*{fat}\left(  p\right)
=n^{o\left(  1\right)  }$.
\end{definition}

One motivation for Definition \ref{lowfatdef} comes from \cite{aar:ph}, where
it was pointed out that the Generalized Linial-Nisan Conjecture is
\textit{equivalent} (via linear programming duality) to the following conjecture:

\begin{conjecture}
[Low-Fat Sandwich Conjecture]\label{lowfatconj}Let $f:\left\{  0,1\right\}
^{n}\rightarrow\left\{  0,1\right\}  $ be computed by an $\mathsf{AC}^{0}$
circuit of size $2^{n^{o\left(  1\right)  }}$. \ Then there exist low-fat
multilinear polynomials $p_{\ell},p_{u}:\left\{  0,1\right\}  ^{n}%
\rightarrow\mathbb{R}$, of degree $n^{o\left(  1\right)  }$, that
\textquotedblleft sandwich\textquotedblright\ $f$ in the following sense:

\begin{enumerate}
\item[(i)] $p_{\ell}\left(  X\right)  \leq f\left(  X\right)  \leq
p_{u}\left(  X\right)  $\ for all $X\in\left\{  0,1\right\}  ^{n}$ and

\item[(ii)] $\operatorname*{E}_{\mathcal{U}}\left[  p_{u}-p_{\ell}\right]
=o\left(  1\right)  $.
\end{enumerate}
\end{conjecture}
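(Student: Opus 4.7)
The plan is to \emph{disprove} Conjecture \ref{lowfatconj} by exhibiting the same pair $(f_{\operatorname{Surj}},\mathcal{D})$ from Section \ref{COUNTER} as an explicit obstruction. Concretely, I will show that no pair of multilinear polynomials $p_{\ell},p_{u}$ of degree $n^{o(1)}$ and fat $n^{o(1)}$ can sandwich $f_{\operatorname{Surj}}$ with expected gap $o(1)$ under $\mathcal{U}$. Since $f_{\operatorname{Surj}}\in\mathsf{AC}^{0}$ by Lemma \ref{inac0}, this single function falsifies the universal quantifier in Conjecture \ref{lowfatconj}.

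The core ingredient is a straightforward ``polynomial fooling'' lemma: if $p=\sum_{C}\alpha_{C}C$ is a representation in which every term $C$ has degree at most $k$, and $\mathcal{D}'$ is any $\varepsilon$-almost $k$-wise independent distribution, then by linearity of expectation and the fooling condition $|\Pr_{\mathcal{D}'}[C]-2^{-|C|}|\leq\varepsilon\cdot2^{-|C|}$,
\[
\bigl|\operatorname*{E}_{\mathcal{D}'}[p]-\operatorname*{E}_{\mathcal{U}}[p]\bigr|\leq\varepsilon\sum_{C}|\alpha_{C}|\,2^{-|C|}\leq\varepsilon\cdot\operatorname{fat}(p).
\]
Instantiating this with the distribution $\mathcal{D}$ from Theorem \ref{glnfalse}, which is $O((k\log^{2}n)/n)$-almost $k$-wise independent, the right-hand side is $o(1)$ whenever both $k$ and $\operatorname{fat}(p)$ are $n^{o(1)}$.

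Now suppose toward contradiction that $(p_{\ell},p_{u})$ is a low-fat sandwich for $f_{\operatorname{Surj}}$ at these parameters. From $p_{u}\geq f_{\operatorname{Surj}}$ and Lemma \ref{dist}, $\operatorname*{E}_{\mathcal{U}}[p_{u}]\geq\operatorname*{E}_{\mathcal{U}}[f_{\operatorname{Surj}}]\geq1/e-o(1)$. From $p_{\ell}\leq f_{\operatorname{Surj}}$ and the fact that $f_{\operatorname{Surj}}$ vanishes identically on the support of $\mathcal{D}$, $\operatorname*{E}_{\mathcal{D}}[p_{\ell}]\leq0$. Applying the fooling lemma to $p_{\ell}$ transfers this to $\operatorname*{E}_{\mathcal{U}}[p_{\ell}]\leq\operatorname*{E}_{\mathcal{D}}[p_{\ell}]+o(1)\leq o(1)$. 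Combining, $\operatorname*{E}_{\mathcal{U}}[p_{u}-p_{\ell}]\geq1/e-o(1)$, contradicting condition (ii) of the sandwich. Hence no such pair exists.

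The one step I expect to require care is a technicality in Definition \ref{lowfatdef}: $\operatorname{fat}(p)$ is the minimum over \emph{all} term representations, and nothing a priori restricts those terms to have degree at most $\deg(p)$, whereas the fooling lemma above crucially needs each term to have degree at most $k$. The cleanest resolution, I believe, goes through the linear-programming duality remarked in \cite{aar:ph}: the dual of the GLN fooling LP at parameter $k$ produces sandwich certificates whose terms automatically have degree $\leq k$, so one may assume WLOG that the fat-achieving representation lives in the bounded-degree basis. Carefully justifying this reduction (either by invoking the duality or by a direct truncation argument that absorbs any high-degree terms into low-degree ones without blowing up fat by more than a constant) is the only non-routine step; once it is in hand, the rest of the argument is the bookkeeping above.
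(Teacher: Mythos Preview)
Your argument is correct in outline and is precisely the ``easy direction'' of the LP duality that the paper invokes. The paper itself disposes of Conjecture~\ref{lowfatconj} in a single sentence: it has already stated (immediately before the conjecture) that the GLN Conjecture is \emph{equivalent} to the Low-Fat Sandwich Conjecture via LP duality from \cite{aar:ph}, and then simply observes that since GLN is false, Conjecture~\ref{lowfatconj} is false as well. Your proposal unpacks one direction of that equivalence by hand: you take the distinguishing distribution $\mathcal{D}$ and show explicitly that it obstructs any low-fat sandwich of $f_{\operatorname{Surj}}$. The fooling calculation you use (linearity plus $|\Pr_{\mathcal{D}}[C]-\Pr_{\mathcal{U}}[C]|\le \varepsilon\,2^{-|C|}$) is the same one the paper carries out in the proof of Theorem~\ref{lowfatthm}, so the mechanics are fully in line with the paper's techniques.

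On the technicality you flag: it is genuine. Definition~\ref{lowfatdef} minimizes over \emph{all} term representations, and nothing forces the optimal one to stay within degree $\deg(p)$, whereas Lemma~\ref{indep} is only stated for $k\le M/2$. Your first proposed fix---appeal to the LP duality from \cite{aar:ph}, where the dual certificates are automatically built from degree-$\le k$ terms---is exactly what the paper does, and at that point your explicit sandwich computation becomes redundant. Your second proposed fix (a direct truncation absorbing high-degree terms without inflating fat) is not obviously routine; naive substitutions like $x_1\cdots x_k = x_1\cdots x_{k-1} - x_1\cdots x_{k-1}(1-x_k)$ \emph{increase} fat, so a cancellation argument would be needed. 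In short: your approach and the paper's are the same idea, with yours making the obstruction mechanism visible at the cost of a bookkeeping issue that the black-box equivalence sidesteps.
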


Without the adjective \textquotedblleft low-fat,\textquotedblright\ Conjecture
\ref{lowfatconj}\ would be equivalent to the \textit{original} Linial-Nisan
Conjecture, as shown by Bazzi \cite{bazzi}. \ And indeed, Braverman
\cite{braverman}\ heavily exploited this equivalence in his proof of the
original LN Conjecture.\footnote{Technically, Braverman constructed
polynomials that satisfied slightly different properties than (i) and (ii)
from Conecture \ref{lowfatconj}. \ However, we know from Bazzi's equivalence
theorem \cite{bazzi}\ that it must be possible to satisfy those properties as
well.}

Of course, from the fact that the GLN Conjecture is false, we can immediately
deduce that Conjecture \ref{lowfatconj} is false as well.

On the other hand, the notion of low-fat polynomials\ seems interesting even
apart from Conjecture \ref{lowfatconj}---for the low-fat condition is a kind
of \textquotedblleft sparseness\textquotedblright\ condition, which might be
useful (for example) in learning theory. \ Furthermore, the falsehood of
Conjecture \ref{lowfatconj}\ does not directly rule out the possibility of
low-fat approximating polynomials for every $\mathsf{AC}^{0}$ function, since
Conjecture \ref{lowfatconj}\ talks only about \textit{sandwiching}
polynomials. \ However, with a bit more work, we now show the existence of an
$\mathsf{AC}^{0}$\ function that has no low-fat, low-degree approximating
polynomial of any kind.

\begin{theorem}
\label{lowfatthm}There exists a Boolean function $f:\left\{  0,1\right\}
^{n}\rightarrow\left\{  0,1\right\}  $, computable by a depth-three
$\mathsf{AC}^{0}$\ circuit, for which any multilinear polynomial $p:\left\{
0,1\right\}  ^{n}\rightarrow\mathbb{R}$ that satisfies $\operatorname*{E}%
_{\mathcal{U}}\left[  \left(  p-f\right)  ^{2}\right]  =o\left(  1\right)
$\ also satisfies $\deg\left(  p\right)  \operatorname*{fat}\left(  p\right)
=\Omega\left(  n/\log^{2}n\right)  $.
\end{theorem}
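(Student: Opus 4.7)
The plan is to pit two bounds on the gap $\Delta := |\operatorname{E}_{\mathcal{U}}[p] - \operatorname{E}_{\mathcal{D}}[p]|$ against each other, where $f := f_{\operatorname{Surj}}$ and $\mathcal{D}$ is the almost $k$-wise independent distribution from Section \ref{COUNTER}. Low degree and low fat (via the almost-independence of $\mathcal{D}$) will force $\Delta \leq O(\deg(p)/M) \cdot \operatorname{fat}(p)$, while the $L_{2}$-approximation hypothesis will force $\Delta \geq 1/e - o(1)$. Since $M = \Theta(n/\log^{2} n)$, comparing the two bounds yields $\deg(p) \cdot \operatorname{fat}(p) = \Omega(n/\log^{2} n)$.

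For the upper bound, take a representation $p = \sum_{C} \alpha_{C} C$ achieving $\operatorname{fat}(p)$, and assume (see the caveat at the end) that every term $C$ references at most $d := \deg(p)$ of the variables $x_{1}, \ldots, x_{N}$. Claim \ref{induc} reduces fooling a Boolean term to fooling a proper one of the same arity, and Lemma \ref{indep} then gives $|\operatorname{E}_{\mathcal{D}}[C] - \operatorname{E}_{\mathcal{U}}[C]| \leq O(d/M) \cdot 2^{-|C|}$. Summing via the triangle inequality yields $\Delta \leq O(d/M) \cdot \operatorname{fat}(p)$.

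For the lower bound, the key structural observation (implicit in the proof of Claim \ref{invclaim}) is that $\mathcal{U}$ and $\mathcal{D}$ are symmetric distributions on $[M]^{N}$ that coincide once we condition on the level set $T_{j} := \{X : |\operatorname{Im}_{X}| = M - j\}$. Hence the Radon--Nikodym density $\rho(X) := d\mathcal{D}/d\mathcal{U}(X)$ depends only on $j(X) := M - |\operatorname{Im}_{X}|$, and in the Poisson ($M \to \infty$) limit equals $j(X)$; a short moment calculation on the number of empty bins in the balls-into-bins experiment gives $\operatorname{E}_{\mathcal{U}}[\rho^{2}] = 2 + o(1)$. Cauchy--Schwarz then gives
\[
\operatorname{E}_{\mathcal{D}}[|p - f|] \;=\; \operatorname{E}_{\mathcal{U}}[|p - f| \cdot \rho] \;\leq\; \sqrt{\operatorname{E}_{\mathcal{U}}[(p - f)^{2}] \cdot \operatorname{E}_{\mathcal{U}}[\rho^{2}]} \;=\; o(1),
\]
and since $f \equiv 0$ on the support of $\mathcal{D}$, we obtain $|\operatorname{E}_{\mathcal{D}}[p]| = o(1)$. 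The analogous (easier) bound under $\mathcal{U}$ gives $\operatorname{E}_{\mathcal{U}}[p] = 1/e + o(1)$, whence $\Delta \geq 1/e - o(1)$.

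The main obstacle is justifying the arity assumption in the upper bound: a fat-minimizing representation may employ terms that reference more than $d$ of the $x_{i}$'s, with cancellations among high-arity monomials, and Lemma \ref{indep} either weakens or becomes inapplicable (once the arity exceeds $M/2$) for such terms. Closing this gap requires either arguing that $\operatorname{fat}(p)$ is attained up to constants by a representation whose terms reference $O(d)$ variables, or a separate bookkeeping (for example, a Fourier-analytic bound on $\operatorname{E}_{\mathcal{D}}[\chi_{S}]$ for small $S$ combined with $\sum_{S} \widehat{p}(S)^{2} = \operatorname{E}_{\mathcal{U}}[p^{2}] = O(1)$) that controls the residual contribution of high-arity terms to $\Delta$. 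This reduction, not the almost-independence argument, is where the real technical work lies.
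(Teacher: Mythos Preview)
Your approach is the paper's: bound $|\operatorname{E}_{\mathcal U}[p]-\operatorname{E}_{\mathcal D}[p]|$ from above by $O(\deg(p)\cdot\operatorname{fat}(p)/M)$ via Lemma~\ref{indep}, and from below by a constant via the $L_2$-closeness of $p$ to $f_{\operatorname{Surj}}$. Your lower-bound step---controlling $\operatorname{E}_{\mathcal D}[|p-f|]$ by Cauchy--Schwarz against $\rho=d\mathcal D/d\mathcal U$ together with the exact computation $\operatorname{E}_{\mathcal U}[\rho^2]=2+o(1)$---is if anything cleaner than the paper's level-set comparison, which asserts $\operatorname{E}_{\mathcal D}[(p-f)^2]=O(\varepsilon\log(1/\varepsilon))+o(1)$ and then applies Cauchy--Schwarz.

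You are right to flag the arity issue, and it is more than a technicality: the paper makes exactly the same tacit assumption when it passes from $\sum_C|\alpha_C|\cdot(2|C|/M)\operatorname{E}_{\mathcal U}[C]$ to $2\deg(p)\operatorname{fat}(p)/M$. Worse, under the literal Definition~\ref{lowfatdef} the statement is too strong. The full-assignment representation
\[
p=\sum_{X\in\{0,1\}^n}p(X)\,\prod_{i:X_i=1}x_i\prod_{i:X_i=0}(1-x_i)
\]
uses $n$-terms and has fat $\operatorname{E}_{\mathcal U}[|p|]=\|p\|_1$, so $\operatorname{fat}(p)\le\|p\|_1$ for \emph{every} $p$. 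Any $p$ with $\operatorname{E}_{\mathcal U}[(p-f)^2]=o(1)$ then has $\operatorname{fat}(p)\le 1/e+o(1)$, and the Linial--Mansour--Nisan approximator of degree $O(\log^3 n)$ (which the paper itself invokes immediately after the theorem) already gives $\deg(p)\operatorname{fat}(p)=O(\log^3 n)$. What your argument and the paper's actually establish is the per-representation bound: for every way of writing $p=\sum_C\alpha_C C$, one has $(\max_C|C|)\cdot\sum_C|\alpha_C|2^{-|C|}=\Omega(n/\log^2 n)$; equivalently, the theorem holds once $\operatorname{fat}(p)$ is read as the minimum over representations using only terms of size at most $\deg(p)$. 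So the ``real technical work'' you anticipate is a correction to the statement, not a missing lemma.
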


\begin{proof}
Once again we let $f=f_{\operatorname*{Surj}}$. \ Let $p$ be a multilinear
polynomial such that $\operatorname*{E}_{\mathcal{U}}\left[  \left(
p-f\right)  ^{2}\right]  =\varepsilon$. \ By definition, we can write $p$ as a
linear combination of terms, $p=\sum_{C}\alpha_{C}C$, such that $\sum
_{C}\left\vert \alpha_{C}\right\vert \operatorname*{E}_{\mathcal{U}}\left[
C\right]  =\operatorname*{fat}\left(  p\right)  $. \ Hence%
\begin{align*}
\operatorname*{E}_{\mathcal{U}}\left[  p\right]  -\operatorname*{E}%
_{\mathcal{D}}\left[  p\right]   &  =\sum_{C}\alpha_{C}\left(
\operatorname*{E}_{\mathcal{U}}\left[  C\right]  -\operatorname*{E}%
_{\mathcal{D}}\left[  C\right]  \right) \\
&  \leq\sum_{C}\left\vert \alpha_{C}\right\vert \left(  \frac{2\left\vert
C\right\vert }{M}\operatorname*{E}_{\mathcal{U}}\left[  C\right]  \right) \\
&  \leq\frac{2\operatorname*{fat}\left(  p\right)  \deg\left(  p\right)  }{M},
\end{align*}
where the second line follows from Lemma \ref{indep}. \ Also, let
$\Delta:=p-f_{\operatorname*{Surj}}$. \ Then as in the proof of Lemma
\ref{dvsu}, we have%
\begin{align*}
\varepsilon &  =\operatorname*{E}_{\mathcal{U}}\left[  \Delta^{2}\right] \\
&  =\sum_{k=0}^{M}\Pr_{\mathcal{U}}\left[  \left\vert \operatorname{Im}%
_{X}\right\vert =M-k\right]  \cdot\operatorname*{E}_{\mathcal{U}}\left[
\Delta^{2}~|~\left\vert \operatorname{Im}_{X}\right\vert =M-k\right] \\
&  \geq\sum_{k=0}^{M}\frac{\operatorname*{E}_{\mathcal{U}}\left[  \Delta
^{2}~|~\left\vert \operatorname{Im}_{X}\right\vert =M-k\right]  }{e\cdot
k!}-o\left(  1\right)  ,
\end{align*}
whereas%
\begin{align*}
\operatorname*{E}_{\mathcal{D}}\left[  \Delta^{2}\right]   &  =\sum_{k=0}%
^{M}\Pr_{\mathcal{D}}\left[  \left\vert \operatorname{Im}_{X}\right\vert
=M-k\right]  \cdot\operatorname*{E}_{\mathcal{D}}\left[  \Delta^{2}%
~|~\left\vert \operatorname{Im}_{X}\right\vert =M-k\right] \\
&  \leq\sum_{k=0}^{M}\frac{\operatorname*{E}_{\mathcal{U}}\left[  \Delta
^{2}~|~\left\vert \operatorname{Im}_{X}\right\vert =M-k\right]  }%
{e\cdot\left(  k-1\right)  !}+o\left(  1\right)  .
\end{align*}
Combining, we find that%
\[
\operatorname*{E}_{\mathcal{D}}\left[  \Delta^{2}\right]  =O\left(
\varepsilon\log\frac{1}{\varepsilon}\right)  +o\left(  1\right)  .
\]
Hence%
\begin{align*}
\operatorname*{E}_{\mathcal{U}}\left[  f_{\operatorname*{Surj}}\right]
-\operatorname*{E}_{\mathcal{D}}\left[  f_{\operatorname*{Surj}}\right]   &
=\left(  \operatorname*{E}_{\mathcal{U}}\left[  p\right]  -\operatorname*{E}%
_{\mathcal{U}}\left[  \Delta\right]  \right)  -\left(  \operatorname*{E}%
_{\mathcal{D}}\left[  p\right]  -\operatorname*{E}_{\mathcal{D}}\left[
\Delta\right]  \right) \\
&  \leq\left(  \operatorname*{E}_{\mathcal{U}}\left[  p\right]
-\operatorname*{E}_{\mathcal{D}}\left[  p\right]  \right)  +\operatorname*{E}%
_{\mathcal{U}}\left[  \Delta\right]  +\operatorname*{E}_{\mathcal{D}}\left[
\Delta\right] \\
&  \leq\frac{2\operatorname*{fat}\left(  p\right)  \deg\left(  p\right)  }%
{M}+\sqrt{\operatorname*{E}_{\mathcal{U}}\left[  \Delta^{2}\right]  }%
+\sqrt{\operatorname*{E}_{\mathcal{D}}\left[  \Delta^{2}\right]  }\\
&  \leq\frac{2\operatorname*{fat}\left(  p\right)  \deg\left(  p\right)  }%
{M}+O\left(  \sqrt{\varepsilon\log\frac{1}{\varepsilon}}\right)  +o\left(
1\right)  ,
\end{align*}
where the third line follows from Cauchy-Schwarz. \ On the other hand, we know
from Lemma \ref{dist}\ that%
\[
\operatorname*{E}_{\mathcal{U}}\left[  f_{\operatorname*{Surj}}\right]
-\operatorname*{E}_{\mathcal{D}}\left[  f_{\operatorname*{Surj}}\right]
\geq\frac{1}{e}-o\left(  1\right)  .
\]
So combining, if $\varepsilon=o\left(  1\right)  $, then%
\[
\operatorname*{fat}\left(  p\right)  \deg\left(  p\right)  =\Omega\left(
\frac{M}{e}\right)  =\Omega\left(  \frac{n}{\log^{2}n}\right)  .
\]

\end{proof}

Since $f_{\operatorname*{Surj}}$\ has a depth-three $\mathsf{AC}^{0}$ circuit,
it follows from Theorem \ref{lmnthm}\ that there exists a polynomial $p$ of
degree $O\left(  \log^{3}n\right)  $\ such that $\operatorname*{E}%
_{\mathcal{U}}\left[  \left(  p-f_{\operatorname*{Surj}}\right)  ^{2}\right]
=o\left(  1\right)  $. \ Thus, one corollary of Theorem \ref{lowfatthm}\ is a
\textit{separation} between low-degree approximation and low-degree low-fat
approximation. \ In other words, there exists a Boolean function $f$\ (namely
$f_{\operatorname*{Surj}}$) that can be well-approximated in $L_{2}$-norm by a
polynomial of degree $O\left(  \operatorname*{polylog}n\right)  $, but
\textit{not} by a low-fat polynomial of degree $O\left(
\operatorname*{polylog}n\right)  $. \ This answers one of the open problems
from \cite{aar:ph}.

\section{Discussion\label{DISC}}

As we said before, we remain sanguine about the prospects for proving an
oracle separation between $\mathsf{BQP}$\ and $\mathsf{PH}$.\ \ In our view,
the lesson of our counterexample is simply that almost $k$-wise independence
is too blunt of an instrument for this problem. \ Looking at the specific
function $f_{\operatorname*{Surj}}$ in the counterexample, we find two
arguments in support of this position. \ Firstly, $f_{\operatorname*{Surj}}$
is extremely different in character from \textsc{Fourier Checking}, or any of
the other candidates for problems in $\mathsf{BQP}\setminus\mathsf{PH}$\ (such
as the ones studied by Fefferman and Umans \cite{feffumans}). \ Indeed,
$f_{\operatorname*{Surj}}$\ is not even in $\mathsf{BQP}$, as can be seen from
the BBBV lower bound \cite{bbbv} for example.\footnote{Using the BBBV lower
bound, one can show further that no $\mathsf{BQP}$\ machine can distinguish
the distributions $\mathcal{D}$\ and $\mathcal{U}$ with constant bias.}
\ Secondly, $f_{\operatorname*{Surj}}$ is trivially in $\mathsf{PH}$\ by
construction---and for that reason, our counterexample does not really say
anything unexpected about \textquotedblleft the power of $\mathsf{PH}%
$.\textquotedblright\ \ To us, the unexpected part is simply the inability of
\textit{approximate local statistics} to \textquotedblleft
certify\textquotedblright\ a problem as outside $\mathsf{PH}$, where
\textit{exact} local statistics succeed in doing so (as shown by Braverman
\cite{braverman}). \ But this is a surprise about proof techniques, not about
complexity classes.

The obvious open problems are

\begin{enumerate}
\item[(1)] to solve the relativized $\mathsf{BQP}$ versus $\mathsf{PH}%
$\ problem by whatever means, and

\item[(2)] to solve the relativized\ $\mathsf{BQP}$ versus $\mathsf{AM}%
$\ problem, \textit{possibly} by proving the depth-two GLN Conjecture.
\end{enumerate}

\noindent We reiterate our offer of a \$200 prize for problem (1)\ and a \$100
prize for problem (2).

A third interesting problem is to show that our function
$f_{\operatorname*{Surj}}\left(  X\right)  $\ cannot be computed in
$\mathsf{\Sigma}_{2}^{p}$, on a $1-\varepsilon$\ fraction of inputs
$X\in\left[  M\right]  ^{N}$. \ This would imply that $\left(  \mathsf{\Pi
}_{2}^{p}\right)  ^{A}\not \subset \left(  \mathsf{\Sigma}_{2}^{p}\right)
^{A}$\ with probability $1$\ relative to a random oracle $A$. \ A fourth
problem is whether one can say \textit{anything} nontrivial about the
block-sensitivity of $\mathsf{AC}^{0}$\ functions: for example, that every
$f\in\mathsf{AC}^{0}$\ has average block-sensitivity $O\left(  n/\log
n\right)  $.

\section{Acknowledgments}

I thank Paul Beame for sharing a draft of the manuscript \cite{beame:quantum}
with me, and Lane Hemaspaandra for pointing me to the paper \cite{cai:bh}\ of Cai.

\bibliographystyle{plain}
\bibliography{thesis}

\begin{thebibliography}{10}

\bibitem{aar:ph}
S.~Aaronson.
\newblock {BQP} and the polynomial hierarchy.
\newblock In {\em Proc. ACM STOC}, 2010.
\newblock arXiv:0910.4698.

\bibitem{bazzi}
L.~Bazzi.
\newblock Polylogarithmic independence can fool {DNF} formulas.
\newblock In {\em Proc. IEEE FOCS}, pages 63--73, 2007.

\bibitem{beame:quantum}
P.~Beame and W.~Machmouchi.
\newblock The quantum query complexity of {$AC^0$}.
\newblock Manuscript, 2010.

\bibitem{bbbv}
C.~Bennett, E.~Bernstein, G.~Brassard, and U.~Vazirani.
\newblock Strengths and weaknesses of quantum computing.
\newblock {\em SIAM J. Comput.}, 26(5):1510--1523, 1997.
\newblock quant-ph/9701001.

\bibitem{bg}
C.~H. Bennett and J.~Gill.
\newblock Relative to a random oracle {A}, {$P^{A}\neq NP^{A}\neq coNP^{A}$}
  with probability 1.
\newblock {\em SIAM J. Comput.}, 10(1):96--113, 1981.

\bibitem{book}
R.~V. Book.
\newblock On collapsing the polynomial-time hierarchy.
\newblock {\em Inform. Proc. Lett.}, 52(5):235--237, 1994.

\bibitem{braverman}
M.~Braverman.
\newblock Poly-logarithmic independence fools {$AC^{0}$} circuits.
\newblock In {\em Proc. IEEE Conference on Computational Complexity}, pages
  3--8, 2009.
\newblock ECCC TR09-011.

\bibitem{cai:bh}
J.-Y. Cai.
\newblock Probability one separation of the {B}oolean {H}ierarchy.
\newblock In {\em Proc. Intl. Symp. on Theoretical Aspects of Computer Science
  (STACS)}, pages 148--158, 1987.

\bibitem{cai:ph}
J.-Y. Cai.
\newblock With probability one, a random oracle separates {PSPACE} from the
  polynomial-time hierarchy.
\newblock 38(1):68--85, 1989.

\bibitem{feffumans}
B.~Fefferman and C.~Umans.
\newblock Pseudorandom generators and the {BQP} vs. {PH} problem.
\newblock http://www.cs.caltech.edu/\symbol{126}umans/papers/FU10.pdf, 2010.

\bibitem{hht}
Y.~Han, L.~Hemaspaandra, and T.~Thierauf.
\newblock Threshold computation and cryptographic security.
\newblock {\em SIAM J. Comput.}, 26(1):59--78, 1997.

\bibitem{lmn}
N.~Linial, Y.~Mansour, and N.~Nisan.
\newblock Constant depth circuits, {F}ourier transform, and learnability.
\newblock {\em J. ACM}, 40(3):607--620, 1993.

\bibitem{ln}
N.~Linial and N.~Nisan.
\newblock Approximate inclusion-exclusion.
\newblock {\em Combinatorica}, 10(4):349--365, 1990.
\newblock Earlier version in STOC'90.

\bibitem{nisan}
N.~Nisan.
\newblock {CREW PRAM}s and decision trees.
\newblock {\em SIAM J. Comput.}, 20(6):999--1007, 1991.

\bibitem{nw}
N.~Nisan and A.~Wigderson.
\newblock Hardness vs. randomness.
\newblock {\em J. Comput. Sys. Sci.}, 49(2):149--167, 1994.

\bibitem{razborov:bazzi}
A.~A. Razborov.
\newblock A simple proof of {B}azzi's theorem.
\newblock {\em ACM Trans. on Computation Theory}, 1(1), 2009.
\newblock ECCC TR08-081.

\bibitem{hastad:book}
J.~H\aa stad.
\newblock {\em Computational Limitations for Small Depth Circuits}.
\newblock MIT Press, 1987.

\bibitem{yao:ph}
A.~C-C. Yao.
\newblock Separating the polynomial-time hierarchy by oracles (preliminary
  version).
\newblock In {\em Proc. IEEE FOCS}, pages 1--10, 1985.

\end{thebibliography}

\end{document}